\documentclass[11pt]{article}

\usepackage[margin=1in]{geometry}
\usepackage{amsmath,amssymb,amsthm,mathtools}
\usepackage{booktabs,array,tabularx,longtable}
\usepackage{graphicx}
\graphicspath{{figs/}}
\usepackage{microtype}
\usepackage{xcolor}
\usepackage{enumitem}
\usepackage{caption}
\usepackage{float}
\usepackage{placeins}
\DeclareMathOperator*{\argmin}{argmin}
\usepackage{natbib}
\usepackage[colorlinks=true,linkcolor=blue!55!black,citecolor=blue!55!black,urlcolor=blue!55!black]{hyperref}

\definecolor{clinicalblue}{RGB}{16,47,85}
\definecolor{clinicalteal}{RGB}{35,139,123}
\definecolor{clinicalamber}{RGB}{185,111,18}
\definecolor{clinicalred}{RGB}{170,51,51}
\definecolor{clinicalgray}{RGB}{82,92,105}

\newtheorem{theorem}{Theorem}[section]
\newtheorem{proposition}[theorem]{Proposition}
\newtheorem{corollary}[theorem]{Corollary}
\newtheorem{lemma}[theorem]{Lemma}
\newtheorem{assumption}[theorem]{Assumption}
\theoremstyle{definition}

\theoremstyle{remark}

\newcommand{\E}{\mathbb E}
\renewcommand{\Pr}{\mathbb P}
\newcommand{\cF}{\mathcal F}
\newcommand{\1}{\mathbf 1}
\newcommand{\R}{\mathbb R}
\newcommand{\erf}{\operatorname{erf}}

\newcommand{\sprem}{\operatorname{sp}}

\title{\textbf{Mitigating the Winner’s Curse While Controlling Multiplicity: e-Process Methods for Anytime-Valid Inference in Dose-Ranging Trials}}
\author{\normalsize Victor K. de la Pe\~na$^{1,2,\dagger}$ \quad Fangyuan Lin$^{1,\dagger}$ \quad Demissie Alemayehu$^{1}$ \quad Victor H. de la Pe\~na$^1$\\[0.3em]
\small $^1$Department of Statistics, Columbia University \quad $^2$Infremacy LLC \\[-0.1em]
\small $^\dagger$These authors contributed equally.}
\date{June 2026}

\begin{document}
\maketitle

\begin{abstract}
Phase II dose-ranging trials commonly select the dose with the largest observed effect compared to control while repeatedly checking the data as it accumulates. This practice introduces two important problems: the “winner’s curse” (overly optimistic estimates for the selected dose) and inflated Type I error due to multiple comparisons across doses and interim analyses. In this paper, we propose a new statistical procedure that provides valid inference at any time during the trial. It tests whether the true effect of the best dose exceeds a clinically meaningful margin, while properly accounting for both the selection of the apparent winner and the repeated looks at the data. The method,  works by subtracting a “selection charge” that corrects for the expected optimism of always choosing the current best dose. This adjusted statistic is then monitored using an anytime-valid testing framework. The resulting decision rule is straightforward: take the observed best dose effect, subtract the selection charge and a monitoring margin, and declare “go” only if the remaining value still exceeds the target clinical margin. We provide ready-to-use implementations for normally distributed and binary outcomes. The procedure guarantees control of the Type I error rate in finite samples and supplies anytime-valid lower confidence bounds for the best dose effect. Its practical performance is demonstrated through worked examples and simulation studies.
\end{abstract}


\noindent\textbf{Keywords:} dose-ranging, selection premium, winner's curse, e-process, confidence sequence, Gaussian endpoint, binary endpoint.

\section{Introduction}\label{sec:intro}

\subsection{The decision problem}
Suppose $K$ active doses are randomized against one concurrent control. We use the following convention throughout:
\begin{center}
\begin{tabular}{@{}p{0.19\textwidth}p{0.72\textwidth}@{}}
\toprule
Symbol & Meaning\\
\midrule
$t$ & index for a sequential information update or analysis look\\
$n_t$ & cumulative number of evaluable subjects per arm at look $t$\\
$X_{t,k}$ & newly observed outcome from dose $k$ at update $t$; $X_{t,0}$ is the concurrent control outcome\\
$\varepsilon_{t,k}$ & centered noise, $X_{t,k}-\E(X_{t,k})$\\
\bottomrule
\end{tabular}
\end{center}

The balanced unit-block theory observes one outcome per arm and control at each update, so $n_t=t$ in that case. $X_{t,k}$ and $X_{t',k'}$ are assumed to be independent whenever $(t,k)\ne (t',k')$. Let
\[
\mu_k:=\E(X_{t,k}),\qquad \mu_0:=\E(X_{t,0}),\qquad
\delta_k:=\mu_k-\mu_0.
\]
The best program-level efficacy effect is
\begin{equation}\label{eq:delta-star}
\delta_*:=\max_{1\le k\le K}\delta_k.
\end{equation}
A protocol specifies a clinically meaningful margin $\delta_0$ and asks
\begin{equation}\label{eq:global-null}
H_0(\delta_0):\ \delta_*\le\delta_0
\qquad\text{versus}\qquad
H_1(\delta_0):\ \delta_*>\delta_0.
\end{equation}
This is a global proof-of-concept question: does the tested dose range contain at least one dose whose mean effect clears the clinical requirement? 

At look $t$, define cumulative outcomes
\[
U_{t,k}:=\sum_{s=1}^t X_{s,k},\qquad C_t:=\sum_{s=1}^t X_{s,0},
\]
and pairwise sample-mean differences
\[
\widehat\delta_{t,k}:=\frac{U_{t,k}-C_t}{n_t},\qquad
\widehat\delta_{\max,t}:=\max_k\widehat\delta_{t,k}
=\frac{\max_kU_{t,k}-C_t}{n_t}.
\]
The maximum is operationally natural to look at because a global go decision is often driven by whichever pairwise comparison currently looks strongest. It, however, creates two logically separate problems.
\begin{enumerate}[leftmargin=*,itemsep=2pt]
\item \textbf{Winner's curse/selection optimism.} Even when every $\widehat\delta_{t,k}$ is unbiased before selection, the largest one is optimistic because it was selected for being large.
\item \textbf{Multiplicity over doses and time.} Looking across $K$ doses and repeatedly over time creates multiple opportunities for a false global go decision.
\end{enumerate}
A multiplicity-adjusted test can solve the second problem while leaving the reported largest effect optimistic. On the other hand, subtracting a bias correction does not license repeated peeking. The paper therefore keeps bias correction and Type-I control as separate guarantees, even though the same selection-premium quantity drives both. Section~\ref{sec:bias-envelope} makes this bias-correction statement precise:
at the equal-margin reference, the accumulated selection premium is an exact
correction for the expected optimism of the selected maximum; under general
heterogeneous dose effects, the same charge gives an upper envelope for the
expected optimism relative to the true best effect $\delta_*$.

\subsection{Statistical guarantee}

For the present paper,
\begin{equation}\label{eq:global-type1}
\sup_{\delta_*\le\delta_0}
\Pr\{\text{a global go occurs at some look }t\}\le\alpha
\end{equation}
is the relevant Type-I requirement. It says that, when no tested dose truly clears the clinical margin, the probability of ever declaring that the program contains an efficacious dose is at most $\alpha$. 

Strong familywise error rate (FWER) is a different and stronger armwise requirement: under every configuration of true and false individual hypotheses, the probability of rejecting at least one true named-dose null is controlled. Our procedure makes no individual dose rejection, but authorizes a program-level efficacy conclusion; named-dose localization requires a another layer.

\subsection{Current approaches in Phase II dose-ranging trials}
\label{sec:current-approaches}

Phase II dose-ranging trials are usually learning-phase studies designed to determine
whether the drug has sufficient activity and to characterize the dose--response
relationship for later development. A common design is a randomized, parallel-group,
multi-arm trial comparing several active doses with a concurrent placebo or standard
control \citep{ICH1994E4}. Allocation is
often balanced, although modestly unbalanced allocation may be used to improve
information about placebo, safety, or clinically promising doses. Sample sizes are
typically chosen prospectively to achieve target operating characteristics, such as
80\% power for a clinically meaningful contrast on a binary or continuous endpoint \citep{piantadosi2024clinical}. Modern designs may also include
prespecified interim analyses, futility or efficacy stopping, dose dropping, or adaptive
dose allocation; these adaptations are usually implemented through group-sequential,
multi-arm multi-stage, combination-test, or Bayesian design principles
\citep{jennison1999group,FDAAdaptive2019,berry2010bayesian, magirr2012generalized, wason2012optimal}.

The corresponding analysis strategies fall into two broad classes. The first class uses
pairwise comparisons of each active dose with the common control. Multiplicity across
doses is then handled by many-to-one procedures such as Dunnett's test, Bonferroni or
Holm adjustments, or hierarchical testing strategies \citep{dunnett1955multiple,holm1979simple,FDA2022}.
These methods are natural when the primary question is whether at least one tested
dose separates from control, and they remain common because they are transparent and
make relatively few assumptions about the shape of the dose-response curve.

The second class uses model-based dose-response methods. The leading example is
MCP-Mod, which combines multiple contrast tests over a prespecified family of
candidate dose-response shapes with subsequent model fitting or model averaging
\citep{bretz2005combining,pinheiro2014model}. In the MCP step, the analysis tests for a non-flat
dose-response signal across candidate shapes such as linear, Emax, exponential, or
logistic models; in the Mod step, statistically supported models are used to estimate
the dose-response curve and select a target dose. FDA and EMA qualification documents
describe MCP-Mod as an appropriate structured methodology for Phase II dose finding
under model uncertainty, because it uses information across the active-dose continuum
and placebo rather than treating the doses only as unrelated pairwise comparisons
\citep{FDA2015MCPMod,EMA2014MCPMod}.

Dose-selection decisions usually have two logically distinct components. First, the
trial seeks proof of activity or proof of concept: evidence that the dose range contains
at least one dose with a clinically meaningful effect relative to control. Second, if such
evidence is obtained, the program selects a dose for further development using a broader
criterion, such as the smallest dose with clinically relevant effect, a high fraction of
the maximum modeled effect, the exposure-response plateau, safety and tolerability,
pharmacokinetics, or overall benefit--risk \citep{ICH1994E4,FDA2024DoseOptimization}.
The present paper addresses the first component. We develop an anytime-valid global
test for whether the best true effect among the tested doses exceeds a prespecified
clinical margin. For mathematical and interpretive clarity, we focus on pairwise
dose--control contrasts and do not rely on a trend test or a parametric dose-response
model.

\subsection{Related work}

\paragraph{Fixed-look many-to-one comparisons.}
Dunnett's procedure uses the correlation induced by a shared control to provide simultaneous many-to-one inference and strong FWER control at a fixed analysis \citep{dunnett1955multiple}. It is usually more efficient than a Bonferroni analysis that ignores this correlation. This solves dose multiplicity at the planned analysis, but it does not by itself license arbitrary repeated looks or quantify the optimism of the raw observed winner.

\paragraph{Preplanned group-sequential and MAMS designs.}
Treatment-selection group-sequential designs and generalized Dunnett/MAMS procedures extend shared-control testing to a finite, prespecified stage structure, including efficacy and futility boundaries and interim treatment selection \citep{stallard2003sequential,magirr2012generalized,wason2012optimal}. These methods can provide strong FWER control across arms and stages and may be preferable as a primary design when stage times, selection rules, and endpoints can be fully specified in advance. The \texttt{MAMS} software described by \citep{jaki2019r} implements several such designs and can also use a conditional-error calculation for unexpected modifications.

\paragraph{Combination tests and adaptive Dunnett tests.}
Adaptive combination tests combine valid stagewise $p$-values from disjoint increments. Adaptive Dunnett procedures use the conditional error of the original many-to-one test, while closure can extend these local tests to strong FWER control for a family of named-dose hypotheses \citep{bauer1994evaluation,koenig2008adaptive,friede2008comparison}. These methods are designed around stagewise tests and individual hypotheses.

\paragraph{Dose-response modeling.}
MCP-Mod combines a prespecified family of dose-response contrasts with model fitting to establish a dose-response signal and estimate a target dose under model uncertainty \citep{bretz2005combining,pinheiro2014model}. It exploits dose ordering and shape; our maximum-based global test deliberately does not. The two can be used in sequence or side by side, but they answer different scientific questions.

\paragraph{Inference after choosing a winner.}
Winner-inference methods target the effect of the selected named option and correct confidence intervals or point estimates for the selection event \citep{andrews2024inference}. Our inferential target is instead $\delta_*$, the best true effect available in the tested dose range, uniformly over analysis time. The target does not identify the best dose or give an exactly unbiased effect for the empirical leader.

\paragraph{Anytime-valid inference and betting.}
Confidence sequences and e-processes solve optional stopping in broad settings \citep{howard2021time,ramdas2023game}, and multi-armed betting tests provide anytime-valid global testing under other observation models \citep{sandoval2026multi}. What is new here is the use of the state-dependent selection premium as a predictable compensator/charge. It has the same units as the clinical effect and simultaneously serves as (i) an interpretable winner-optimism ledger and (ii) the predictable conditional-drift charge needed before safe exponentiation.

\section{The selection premium identity and the equal-margin reference}\label{sec:identity}

\subsection{The selection premium identity}
Let $X_t=(X_{t,1},\ldots,X_{t,K})$ denote the observed dose outcomes at update $t$ and write
$\varepsilon_t:=X_t-\E(X_t)$ for their centered noise. Let
\[
S_{t,k}:=\sum_{s=1}^t\varepsilon_{s,k},\qquad M_t:=\max_kS_{t,k}.
\]
The selection premium function (see \citep{de2026selection} for its properties) is a function of a state and an increment law. Let $\mathcal L$ denote the law of one fresh centered null update, and let $\eta=(\eta_1,\ldots,\eta_K)\sim\mathcal L$ be an independent copy of that next update. Define
\begin{equation}\label{eq:phi-foundation}
\sprem_{\mathcal L}(u):=\E_{\eta\sim\mathcal L}\!\left[\max_k(u_k+\eta_k)\right]-\max_k u_k,
\qquad u\in\R^K.
\end{equation}
The symbol $\eta$ is not an observed residual and not a second data set. It is the generic fresh noise used to compute the expectation of the next update. If the increments are i.i.d., the same function $\sprem_{\mathcal L}$ is used at every update. 

If one must commit now, the value is $\max_k u_k$. If one waits for one more update and may then choose the leader, the expected value is the first term in \eqref{eq:phi-foundation}. Thus $\sprem_{\mathcal L}(u)$ is the expected one-step value of retaining the option to switch leaders. For update $s$, write
\[
a_s:=\sprem_{\mathcal L}(S_{s-1}),\qquad A_t:=\sum_{s=1}^t a_s .
\]

The foundation paper \citep{de2026selection} proves, in the i.i.d. case,
\begin{equation}\label{eq:foundation-identity}
\E M_t=\sum_{s=1}^t\E\{\sprem_{\mathcal L}(S_{s-1})\}=\E\sum_{s=1}^t a_s,
\end{equation}
with a stopping-time analogue and a heterogeneous-mean extension \citep[Thms.~3.1, 3.5 and 4.3]{de2026selection}. It also proves translation invariance of $\sprem_{\mathcal L}$, maximality at the tie configuration, domination of heterogeneous-mean premiums by the equal-margin premium, and decay after a true leader emerges.

\subsection{Equal-margin configuration is a calibration reference}
For a candidate clinical effect $\delta$, the \emph{equal-margin reference} is the hypothetical model in which every dose has mean $\mu_0+\delta$, so every pairwise dose-control effect equals $\delta$. The actual trial can (and most likely) have arbitrary heterogeneous effects.

The reference is useful for two operational reasons.
\begin{enumerate}[leftmargin=*,itemsep=3pt]
\item \textbf{It isolates pure selection.} When dose means are equal, deterministic mean differences cannot explain the maximum. Its excess is entirely the value of being able to choose the noisiest leader.
\item \textbf{It gives the largest unknown-gap premium.} \citep{de2026selection} proves that true mean gaps only reduce the premium. When the gaps between the true dose effects are unknown, the equal-margin premium is therefore a conservative envelope for the winner's curse.
\end{enumerate}
The equal-margin calculation is therefore directly useful even when no investigator believes the true effects are exactly equal: it provides an exact benchmark for pure winner optimism, a worst-case charge when gaps between the true dose effects are unknown, and a state-dependent diagnostic of whether dose competition is still active.

Figure~\ref{fig:equalmeans} illustrates the behavior of the selection premium function $\sprem_{\mathcal L}$.

\begin{figure}[H]
\centering
\includegraphics[width=\textwidth]{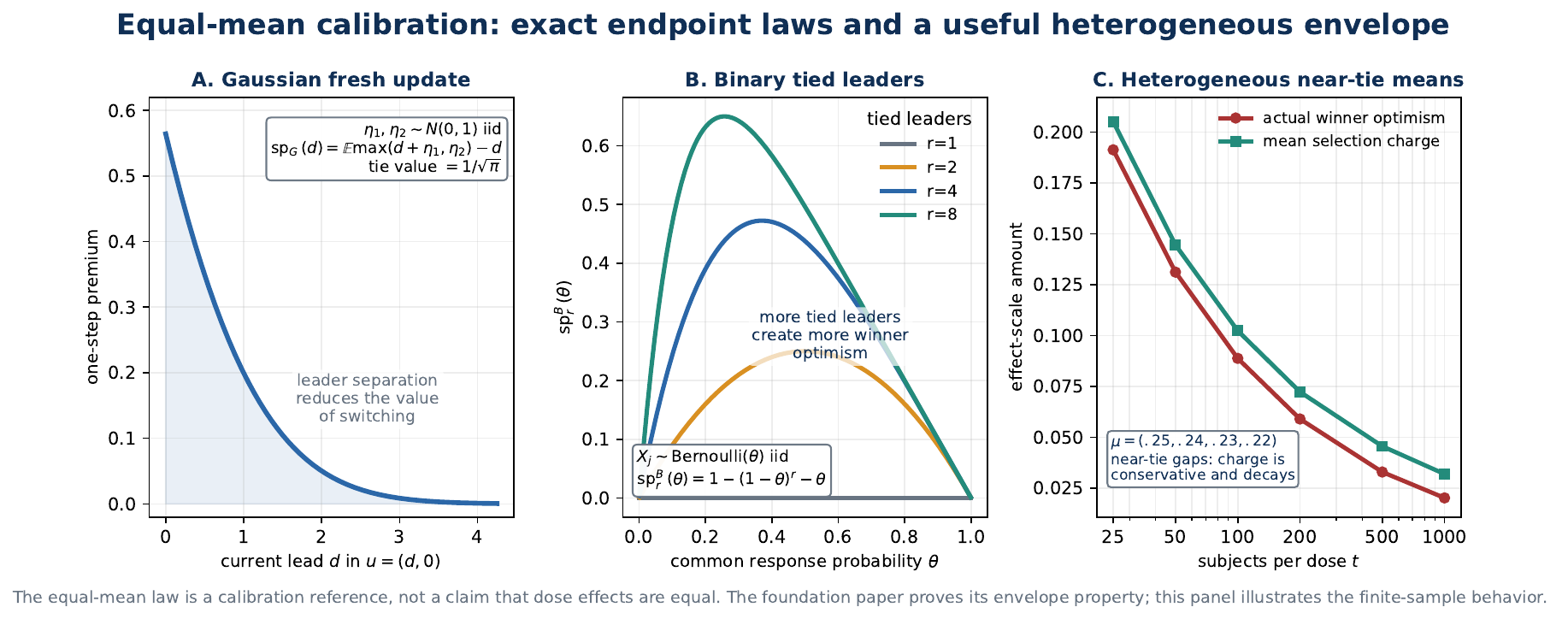}
\caption{Panel A uses $K=2$, state $u=(d,0)$, and i.i.d. $N(0,1)$ fresh dose noises. Panel B uses $r\in\{1,2,4,8\}$ tied leaders with i.i.d. Bernoulli$(\theta)$ outcomes; in the binary trial reference, $\theta=p_0+\delta$. Panel C uses a heterogeneous near-tie Gaussian example with $K=4$, independent $N(0,1)$ errors, and means $(0.25,0.24,0.23,0.22)$. The red curve is the exact expected fixed-look winner optimism $\E(\max_k\bar X_{t,k}-\mu_*)$. The green curve is the mean state-adaptive equal-margin selection charge $\E(A_t/t)$ from $3{,}500$ simulated paths; the shaded band is a Monte Carlo 95\% interval. Both curves decay with information.}
\label{fig:equalmeans}
\end{figure}

\section{From selection premium to e-process}\label{sec:generic}

\subsection{Cumulative score and corrected residual}
For any candidate effect $\delta$, define the cumulative global score at look $t$ by
\begin{equation}\label{eq:G}
G_t(\delta):=\max_{1\le k\le K}U_{t,k}-C_t-n_t\delta
=n_t\{\widehat\delta_{\max,t}-\delta\}.
\end{equation}
A large positive value means the current best pairwise estimate exceeds $\delta$. Due to winner's curse, it is not safe to monitor $G_t$ directly because the maximum operation creates positive selection drift.

Let $a_t(\delta)$ denote the selection charge computed from information available immediately before update $t$; equivalently, $a_t(\delta)=\sprem_{\mathcal L_t(\delta)}(\text{current state})$ for the appropriate null increment law and current score state. Define
\begin{equation}\label{eq:A-R-V}
A_t(\delta):=\sum_{s=1}^t a_s(\delta),\qquad
R_t(\delta):=G_t(\delta)-A_t(\delta),\qquad
V_t:=\sum_{s=1}^t v_s,
\end{equation}
where both $a_s(\delta)$ and $v_s$ are $\cF_{s-1}$-measurable, hence predictable. 

The next theorem is the bridge from the selection-premium identity to an anytime-valid decision rule. At the equal-effect reference, the running maximum has predictable conditional increment
\[
\E\{\Delta M_t\mid\cF_{t-1}\}=\sprem_{\mathcal L_t}(\text{current state})\ge0.
\]
Equivalently, the maximum is a submartingale and the evaluated selection premium is its predictable upward drift. Once this charge is subtracted, the residual has nonpositive conditional drift and can be turned into an e-process.

\begin{theorem}[Selection premium to e-process]\label{thm:generic}
Fix a candidate $\delta$. Suppose that under every distribution in $H_0(\delta)$,
\begin{align}
\E\{\Delta G_t(\delta)\mid\cF_{t-1}\}&\le a_t(\delta),\label{eq:drift-condition}\\
\E\!\left[\exp\left\{\lambda\bigl(\Delta G_t(\delta)-\E[\Delta G_t(\delta)\mid\cF_{t-1}]\bigr)\right\}\middle|\cF_{t-1}\right]
&\le \exp(\lambda^2v_t/2)\label{eq:mgf-condition}
\end{align}
for every $\lambda\ge0$, where $a_t(\delta)$ and $v_t$ are predictable. Then, for each fixed $\lambda\ge0$,
\begin{equation}\label{eq:L}
L_t(\lambda;\delta):=\exp\{\lambda R_t(\delta)-\lambda^2V_t/2\}
\end{equation}
is a nonnegative supermartingale with $L_0=1$.

If $\rho>0$ is fixed before monitoring and $L_t$ is mixed against the half-normal density
\[
\pi_\rho(\lambda)=\sqrt{2\rho/\pi}\exp(-\rho\lambda^2/2)\1\{\lambda\ge0\},
\]
then
\begin{equation}\label{eq:eprocess}
\mathcal E_t(\delta)
=\sqrt{\frac{\rho}{V_t+\rho}}
\left[1+\erf\left\{\frac{R_t(\delta)}{\sqrt{2(V_t+\rho)}}\right\}\right]
\exp\left\{\frac{R_t(\delta)^2}{2(V_t+\rho)}\right\}
\end{equation}
is an e-process. Consequently, for $\alpha > 0$
\begin{equation}\label{eq:ville}
\sup_{P\in H_0(\delta)}
\Pr_P\left\{\sup_{t\ge1}\mathcal E_t(\delta)\ge\frac1\alpha\right\}\le\alpha.
\end{equation}
\end{theorem}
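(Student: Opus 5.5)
The plan has three parts: show that $L_t(\lambda;\delta)$ is a supermartingale for each fixed $\lambda$, mix over $\lambda$ with Fubini and conditional Fubini, and finish with Ville's inequality.

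\textbf{Step 1: supermartingale property for fixed $\lambda$.} Fix $P\in H_0(\delta)$ and $\lambda\ge0$. Write $\Delta R_t=\Delta G_t(\delta)-a_t(\delta)$, and set $D_t:=\Delta G_t(\delta)-\E[\Delta G_t(\delta)\mid\cF_{t-1}]$. Then
\[
L_t=L_{t-1}\exp\{\lambda\Delta R_t-\lambda^2v_t/2\},\qquad
\lambda\Delta R_t=\lambda D_t+\lambda\bigl(\E[\Delta G_t(\delta)\mid\cF_{t-1}]-a_t(\delta)\bigr).
\]
By \eqref{eq:drift-condition} and $\lambda\ge0$, the second term on the right is a predictable quantity that is at most $0$. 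Since $L_{t-1}$, $a_t$ and $v_t$ are $\cF_{t-1}$-measurable, they come out of the conditional expectation. Applying \eqref{eq:mgf-condition} then gives
\[
\E[L_t\mid\cF_{t-1}]\le L_{t-1}\exp(\lambda^2v_t/2)\exp(-\lambda^2v_t/2)=L_{t-1}.
\]
Nonnegativity is immediate because $L_t$ is an exponential. Also $L_0=1$, since $R_0=V_0=0$. Integrability follows by induction from the same bound.

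\textbf{Step 2: the mixture.} Define $\mathcal E_t:=\int_0^\infty L_t(\lambda;\delta)\pi_\rho(\lambda)\,d\lambda$. Since $\rho$ is fixed in advance, $\pi_\rho$ is a deterministic probability density on $[0,\infty)$; one checks that $\int_0^\infty\sqrt{2\rho/\pi}\,e^{-\rho\lambda^2/2}d\lambda=1$. The integrand is nonnegative, so conditional Tonelli applies:
\[
\E[\mathcal E_t\mid\cF_{t-1}]=\int\E[L_t(\lambda)\mid\cF_{t-1}]\pi_\rho\,d\lambda\le\mathcal E_{t-1},
\]
with $\mathcal E_0=1$. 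So $\mathcal E_t$ is a nonnegative supermartingale, and in particular an e-process under every $P\in H_0(\delta)$.

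\textbf{Step 3: closed form.} We compute the integral by completing the square in the exponent:
\[
\lambda R-\lambda^2(V+\rho)/2=-\tfrac{V+\rho}{2}\bigl(\lambda-R/(V+\rho)\bigr)^2+R^2/\{2(V+\rho)\}.
\]
Integrating over $\lambda\ge0$ gives
\[
\sqrt{2\rho/\pi}\;e^{R^2/2(V+\rho)}\sqrt{\pi/(2(V+\rho))}\Bigl[1+\erf\bigl(R/\sqrt{2(V+\rho)}\bigr)\Bigr],
\]
which simplifies to \eqref{eq:eprocess}. Here we use the identity $\int_{-c}^\infty e^{-x^2}dx=\tfrac{\sqrt\pi}{2}(1+\erf c)$ after the substitution $x=\sqrt{(V+\rho)/2}\,(\lambda-R/(V+\rho))$.

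\textbf{Step 4: Ville's inequality.} Ville's inequality for nonnegative supermartingales with initial value $1$ gives $\Pr_P\{\sup_t\mathcal E_t\ge1/\alpha\}\le\alpha$ for each $P\in H_0(\delta)$. Taking the supremum over $P$ yields \eqref{eq:ville}.

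The main obstacle is not any single calculation. The care is in Step 1: the drift inequality has to be used with the correct sign, which is where $\lambda\ge0$ is essential and why the half-normal mixture, rather than a normal mixture over all of $\R$, is required. The interchange of integration and conditional expectation in Step 2 also needs justification, and nonnegativity of the integrand supplies it via Tonelli.
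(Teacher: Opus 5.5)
Your proposal is correct and follows essentially the same route as the paper: you split $\lambda\Delta R_t$ into $\lambda$ times the centered increment plus the nonpositive predictable term $\lambda\bigl(\E[\Delta G_t(\delta)\mid\cF_{t-1}]-a_t(\delta)\bigr)$, which is the paper's $W_t-b_t$ decomposition. You then bound the conditional MGF using $\lambda\ge0$, mix via Tonelli, complete the square for the closed form, and conclude with Ville's inequality. Your extra checks (normalization of $\pi_\rho$, $\mathcal E_0=1$, integrability by induction) are details the paper leaves implicit.
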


\begin{proof}
Write
\[
b_t:=a_t(\delta)-\E\{\Delta G_t(\delta)\mid\cF_{t-1}\}\ge0
\]
and
$W_t:=\Delta G_t(\delta)-\E\{\Delta G_t(\delta)\mid\cF_{t-1}\}$.
Then $\Delta R_t=W_t-b_t$. For $\lambda\ge0$,
\[
\E(e^{\lambda\Delta R_t}\mid\cF_{t-1})
=e^{-\lambda b_t}\E(e^{\lambda W_t}\mid\cF_{t-1})
\le e^{-\lambda b_t}e^{\lambda^2v_t/2}
\le e^{\lambda^2v_t/2}.
\]
Multiplying by $L_{t-1}(\lambda;\delta)e^{-\lambda^2v_t/2}$ and conditioning proves
$\E\{L_t(\lambda;\delta)\mid\cF_{t-1}\}\le L_{t-1}(\lambda;\delta)$.

Nonnegative mixing preserves the supermartingale property by Tonelli's theorem; a different prespecified mixing law would give a different valid boundary. The closed form in \eqref{eq:eprocess} is the standard half-normal mixture calculation \citep{howard2020time}. Substituting \eqref{eq:L} and completing the square gives
\begin{align*}
\mathcal E_t(\delta)
&=\sqrt{\frac{2\rho}{\pi}}\int_0^\infty
\exp\left\{\lambda R_t-\frac{V_t+\rho}{2}\lambda^2\right\}d\lambda\\
&=\sqrt{\frac{2\rho}{\pi}}
\exp\left\{\frac{R_t^2}{2(V_t+\rho)}\right\}
\int_0^\infty
\exp\left[-\frac{V_t+\rho}{2}
\left\{\lambda-\frac{R_t}{V_t+\rho}\right\}^2\right]d\lambda.
\end{align*}
With $y=\sqrt{(V_t+\rho)/2}\{\lambda-R_t/(V_t+\rho)\}$, the lower integration limit is
$-R_t/\sqrt{2(V_t+\rho)}$, and the Gaussian tail integral is
$\sqrt{\pi/[2(V_t+\rho)]}[1+\erf\{R_t/\sqrt{2(V_t+\rho)}\}]$. This yields \eqref{eq:eprocess}. Ville's inequality gives \eqref{eq:ville}.
\end{proof}

\subsection{The GO rule}
For a variance budget $v\ge0$, define $q_\alpha(v;\rho)$ as the nonnegative solution $z$ of
\begin{equation}\label{eq:q}
\sqrt{\frac{\rho}{v+\rho}}
\left[1+\erf\left\{\frac{z}{\sqrt{2(v+\rho)}}\right\}\right]
\exp\left\{\frac{z^2}{2(v+\rho)}\right\}=\frac1\alpha.
\end{equation}
The solution exists and is unique. At $z=0$, the left side equals $\sqrt{\rho/(v+\rho)}\le1<1/\alpha$, while it diverges as $z\to\infty$. Moreover, for $z\ge0$ its log derivative is
\[
\frac{z}{v+\rho}
+\frac{\sqrt{2/\pi}\exp\{-z^2/[2(v+\rho)]\}}
{\sqrt{v+\rho}\,[1+\erf\{z/\sqrt{2(v+\rho)}\}]}>0,
\]
so the left side is strictly increasing.

At the protocol margin $\delta_0$,
\[
R_t(\delta_0)
=n_t\{\widehat\delta_{\max,t}-\delta_0\}-A_t(\delta_0).
\]
Because \eqref{eq:eprocess} is increasing in $R_t$ on the relevant nonnegative side,
$\mathcal E_t(\delta_0)\ge1/\alpha$ is equivalent to
$R_t(\delta_0)\ge q_\alpha(V_t;\rho)$. Dividing by $n_t$ yields the operational rule
\begin{equation}\label{eq:generic-decision}
\boxed{\quad
\widehat\delta_{\max,t}
\ge \delta_0+ \frac{A_t(\delta_0)}{n_t}+\frac{q_\alpha(V_t;\rho)}{n_t}.
\quad}
\end{equation}
Every term has a separate role: $\widehat\delta_{\max,t}$ is the raw best effect,
$A_t(\delta_0)/n_t$ is the selection-premium charge, $\delta_0$ is the clinical requirement, and
$q_\alpha(V_t;\rho)/n_t$ is the extra margin needed for safe repeated monitoring.

\subsection{Choosing the half-normal tuning constant \texorpdfstring{$\rho$}{rho} in practice}\label{sec:rho}
The parameter $\rho$ is the precision of the half-normal mixing distribution over the betting parameter $\lambda$. The choice of $\rho$ does not change Type I validity but can affect the exact Type I error.

For $v>0$, write $c=\rho/v$ and $z=\sqrt v\,x$ in \eqref{eq:q}. Then
\begin{equation}\label{eq:q-scale}
q_\alpha(v;\rho)=\sqrt v\,Q_\alpha(\rho/v),
\end{equation}
where $Q_\alpha(c):=q_\alpha(1;c)$. Hence only the ratio $\rho/v$ matters. If the trial team has a plausible variance budget $V_*$ at which a go decision would be most valuable, a natural targeted choice is
\begin{equation}\label{eq:rho-target}
\rho=c_\alpha^*V_*,\qquad
c_\alpha^*:=\argmin_{c>0}Q_\alpha(c).
\end{equation}
In the binary simulations below, $V_t=t/2$. The reported value $\rho=13.41$ targets $V_*\approx 89.7$, or about $179$ subjects per arm, because for $\alpha=0.05$ the minimizer in \eqref{eq:rho-target} is approximately $c_{0.05}^*=0.1494$. This tuning choice affects efficiency but not Type I validity.

\section{Gaussian outcomes}\label{sec:gaussian}

\subsection{Model and computable premium}
Gaussian outcomes cover continuous Phase II endpoints such as change from baseline in blood pressure, symptom score, or a biomarker. We use balanced unit blocks for clarity, so $n_t=t$ in the displayed formulas.

\begin{assumption}[Gaussian shared-control blocks]\label{ass:gaussian}
Across blocks, $(X_{t,1},\ldots,X_{t,K},X_{t,0})$ are independent. Within a block,
\[
X_{t,1:K}\sim N_K(\mu_{1:K},\Sigma_D),\qquad
X_{t,0}\sim N(\mu_0,\sigma_0^2),
\]
and the dose vector is independent of control. Before monitoring begins, the analysis specifies deterministic bounds
$\bar\Sigma_D\succeq\Sigma_D$ and $\bar\sigma_0^2\ge\sigma_0^2$.
\end{assumption}
$\bar\Sigma_D$ and $\bar\sigma_0^2$ can be based on external information or an independent variance-estimation stage.

For a centered Gaussian dose-noise vector $\varepsilon^G\sim N_K(0,\bar\Sigma_D)$, define the Gaussian selection premium
\begin{equation}\label{eq:gaussian-premium}
\sprem_G(u;\bar\Sigma_D):=
\E\!\left[\max_k(u_k+\varepsilon_k^G)\right]-\max_k u_k.
\end{equation}
Translation invariance makes it computable from the observed cumulative dose vector $U_{t-1}=(U_{t-1,1},\ldots,U_{t-1,K})$ without knowing $\mu_0$ or $\delta$. Set
\begin{equation}\label{eq:gaussian-charge}
A_t^G:=\sum_{s=1}^t\sprem_G(U_{s-1};\bar\Sigma_D),
\qquad
V_t^G:=t\bar v_G,
\qquad
\bar v_G:=\max_k(\bar\Sigma_D)_{kk}+\bar\sigma_0^2.
\end{equation}
The shared control is absent from $\sprem_G$ because a common subtraction cannot change which dose is largest.

For two doses, if the current score gap is $d=|u_1-u_2|$ and
$\omega^2=\operatorname{Var}(\varepsilon_1^G-\varepsilon_2^G)$, then
\begin{equation}\label{eq:gaussian-two}
\sprem_G(u;\bar\Sigma_D)=\omega\varphi(d/\omega)-d\{1-\Phi(d/\omega)\},
\end{equation}
where $\varphi$ and $\Phi$ are the standard normal density and distribution function. For independent doses with standard deviations $\sigma_k$, the premium for general $K$ can be computed via the one-dimensional density of the maximum,
\begin{equation}\label{eq:gaussian-integral}
\E\max_k(u_k+Y_k)
=\int_{-\infty}^{\infty}x
\sum_{k=1}^K\frac1{\sigma_k}\varphi\!\left(\frac{x-u_k}{\sigma_k}\right)
\prod_{j\ne k}\Phi\!\left(\frac{x-u_j}{\sigma_j}\right)dx.
\end{equation}
For correlated $\bar\Sigma_D$, evaluate \eqref{eq:gaussian-premium} directly by numerical integration or Monte Carlo.

\subsection{Gaussian decision rule}
The following proposition is the bias-accounting part of the Gaussian application. It tells a trial team that at the equal-margin reference, $A_t^G/n_t$ is the exact expected upward bias of the reported maximum and acts as an upper bound for the expected selection bias for arbitrary true effect configurations.

\begin{proposition}[Gaussian equal-margin bias identity]\label{prop:gaussian-bias}
Suppose Assumption~\ref{ass:gaussian} holds with $\bar\Sigma_D=\Sigma_D$ and $\bar\sigma_0^2=\sigma_0^2$, and suppose
$\mu_1-\mu_0=\cdots=\mu_K-\mu_0=\delta$. Then for every fixed look $t$,
\begin{equation}\label{eq:gaussian-exact-bias}
\E\{\widehat\delta_{\max,t}-\delta\}=\frac1{n_t}\E(A_t^G).
\end{equation}
For every bounded stopping time $\tau$,
\begin{equation}\label{eq:gaussian-stop}
\E\{\max_kU_{\tau,k}-C_\tau-n_\tau\delta\}=\E(A_\tau^G).
\end{equation}
\end{proposition}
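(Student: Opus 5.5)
Under the equal-margin hypothesis every dose has mean $\mu_0+\delta$, so $U_{t,k}=t(\mu_0+\delta)+S_{t,k}$ with $S_{t,k}=\sum_{s\le t}\varepsilon_{s,k}$, and $C_t=t\mu_0+S^0_t$ with $S^0_t=\sum_{s\le t}\varepsilon_{s,0}$ the centered control sum. With $n_t=t$, the common drift cancels and
\[
\max_kU_{t,k}-C_t-n_t\delta=M_t-S^0_t,\qquad M_t:=\max_kS_{t,k}.
\]
The plan is therefore to show that $M_t-A^G_t$ is a martingale and that $S^0_t$ is a martingale with mean zero. Taking expectations at the fixed time $t$ then gives \eqref{eq:gaussian-exact-bias} after dividing by $n_t$. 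Optional stopping at a bounded $\tau$ gives \eqref{eq:gaussian-stop}.

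\textbf{Step 1 (the charge is the compensator of $M_t$).} By translation invariance of $\sprem_G$ (Section~\ref{sec:identity}),
\[
\sprem_G(U_{s-1};\Sigma_D)=\sprem_G(S_{s-1}+(s-1)(\mu_0+\delta)\mathbf{1};\Sigma_D)=\sprem_G(S_{s-1};\Sigma_D),
\]
because a common shift leaves both $\E\max_k(u_k+\varepsilon_k)$ and $\max_ku_k$ shifted by the same constant. Here we use $\bar\Sigma_D=\Sigma_D$. The increment $\varepsilon_s=(\varepsilon_{s,1},\dots,\varepsilon_{s,K})\sim N_K(0,\Sigma_D)$ is independent of $\cF_{s-1}$. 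Using $S_{s-1}\in\cF_{s-1}$ and the freezing lemma,
\[
\E\{M_s\mid\cF_{s-1}\}=\E_{\eta}\Big[\max_k(S_{s-1,k}+\eta_k)\Big]=M_{s-1}+\sprem_G(S_{s-1};\Sigma_D).
\]
Hence $N_t:=M_t-A^G_t$ is a martingale with $N_0=0$. This is exactly the identity \eqref{eq:foundation-identity} of \citep{de2026selection}, which could be cited instead. The filtration $\cF_t$ should be taken to include the control outcomes as well. Independence of the block-$s$ dose vector from control and from all earlier blocks (Assumption~\ref{ass:gaussian}) keeps the conditional law of $\varepsilon_s$ equal to $N_K(0,\Sigma_D)$.

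\textbf{Step 2 (integrability and stopping).} Integrability is immediate: $|M_t|\le\sum_k|S_{t,k}|$ is Gaussian-integrable. Also $0\le\sprem_G(u)\le\E\max_k\eta_k-\min_k\E\eta_k$, which is bounded by the constant $\E\max_k|\eta_k|$, using the bound $\E\max_k(u_k+\eta_k)\le\max_ku_k+\E\max_k\eta_k$. So each $a_s$ is bounded. The control sum $S^0_t$ is a centered random walk and hence a martingale. The fixed-$t$ statement follows from $\E N_t=0$ and $\E S^0_t=0$. For a bounded stopping time $\tau\le T$, the optional stopping theorem for martingales at bounded stopping times gives $\E N_\tau=0$ and $\E S^0_\tau=0$, which yields \eqref{eq:gaussian-stop}.

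The only step I expect to need care is Step 1's conditioning. Translation invariance must be used so that the charge computed from observed $U_{s-1}$ is $\cF_{s-1}$-measurable and coincides with the premium at the centered state. The filtration must also be specified so that the next increment is independent of it. Everything else is routine.
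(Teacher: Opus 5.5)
Your proposal is correct and follows the paper's route: center the dose walks so the common drift cancels, use translation invariance to identify the charge at $U_{s-1}$ with the premium at the centered state $S_{s-1}$, apply the fixed- and stopped-horizon selection-premium identity, and subtract the mean-zero control sum. The only difference is that you prove that identity directly, showing that $A_t^G$ is the compensator of $M_t$ and then using optional stopping at bounded $\tau$, rather than citing the foundation paper, so your argument is self-contained.
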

\begin{proof}
Under the equal-margin reference, $U_{t,k}-n_t(\mu_0+\delta)$ are centered Gaussian dose walks. Translation invariance identifies their one-step premium with \eqref{eq:gaussian-premium} evaluated at $U_{t-1}$. Apply the fixed- and stopped-horizon identities from \citet{de2026selection}, then subtract the centered control sum.
\end{proof}

The next theorem gives Type I error control. The equal-margin premium becomes a safe predictable drift charge over the entire composite null, including heterogeneous configurations in which some doses are below the margin.

\begin{theorem}[Gaussian anytime-valid global test]\label{thm:gaussian}
Under Assumption~\ref{ass:gaussian}, for every candidate $\delta$ and every parameter vector satisfying $\delta_k\le\delta$ for all $k$,
\begin{equation}\label{eq:gaussian-drift}
\E\{\Delta G_t(\delta)\mid\cF_{t-1}\}
\le \sprem_G(U_{t-1};\bar\Sigma_D).
\end{equation}
Moreover, the centered increment of $G_t(\delta)$ is conditionally sub-Gaussian with variance proxy $\bar v_G$. Therefore Theorem~\ref{thm:generic}, with $A_t=A_t^G$ and $V_t=V_t^G$, yields an e-process $\mathcal E_t^G(\delta)$ and
\[
\sup_{P\in H_0(\delta)}
\Pr_P\left\{\sup_t\mathcal E_t^G(\delta)\ge1/\alpha\right\}\le\alpha.
\]
Equivalently, declaring GO at the first look satisfying
\begin{equation}\label{eq:gaussian-decision}
\boxed{\quad
\widehat\delta_{\max,t}
\ge\delta_0+\frac{A_t^G}{n_t}+\frac{q_\alpha(t\bar v_G;\rho)}{n_t}
\quad}
\end{equation}
yields a level-$\alpha$ global test of $H_0(\delta_0)$ at arbitrary data-dependent looks.
\end{theorem}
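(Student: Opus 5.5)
The plan is to verify the two hypotheses of Theorem~\ref{thm:generic}, the drift condition \eqref{eq:drift-condition} with $a_t=\sprem_G(U_{t-1};\bar\Sigma_D)$ and the sub-Gaussian condition \eqref{eq:mgf-condition} with $v_t=\bar v_G$. The conclusions then follow from that theorem together with the monotone equivalence leading to \eqref{eq:generic-decision}. The filtration is $\cF_{t-1}=\sigma(X_{s,k}:s\le t-1,\,0\le k\le K)$. Both $a_t$ and $v_t$ are clearly predictable: $U_{t-1}$ is $\cF_{t-1}$-measurable and $\bar v_G$ is deterministic. I write
\[
\Delta G_t(\delta)=\max_k(U_{t-1,k}+X_{t,k})-\max_kU_{t-1,k}-X_{t,0}-\delta .
\]

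\textbf{Drift.} Conditional on $\cF_{t-1}$, the block $X_t$ is independent of the past. Write $X_{t,k}=\mu_k+\varepsilon_{t,k}$ with $\varepsilon_{t,1:K}\sim N_K(0,\Sigma_D)$, and note $E[X_{t,0}]=\mu_0$. Then
\[
\E\{\Delta G_t(\delta)\mid\cF_{t-1}\}=\E\max_k(U_{t-1,k}+\mu_k+\varepsilon_{t,k})-\max_kU_{t-1,k}-\mu_0-\delta .
\]
Under the null, $\mu_k\le\mu_0+\delta$ for every $k$, and the maximum is coordinatewise monotone. So the first term is at most $\E\max_k(U_{t-1,k}+\varepsilon_{t,k})+\mu_0+\delta$, and the drift is at most $\sprem_{\Sigma_D}(U_{t-1})$. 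Two things remain:
\begin{itemize}
\item Translation invariance is already built in, since no centering of $U$ was needed.
\item Covariance domination requires $\sprem_G(u;\Sigma_D)\le\sprem_G(u;\bar\Sigma_D)$ whenever $\bar\Sigma_D\succeq\Sigma_D$.
\end{itemize}
For the second point, I represent $\varepsilon^{\bar G}\overset{d}{=}\varepsilon+\xi$ with $\xi\sim N(0,\bar\Sigma_D-\Sigma_D)$ independent of $\varepsilon$. Because $x\mapsto\max_k(u_k+x_k)$ is convex, Jensen's inequality conditional on $\varepsilon$ gives $\E\max_k(u_k+\varepsilon_k+\xi_k)\ge\E\max_k(u_k+\varepsilon_k)$. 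This establishes \eqref{eq:gaussian-drift}.

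\textbf{Sub-Gaussianity.} This is the main obstacle, because the increment $Z:=\max_k(U_{t-1,k}+X_{t,k})-X_{t,0}$ is a nonlinear function of Gaussian noise. I would use Gaussian concentration for Lipschitz functions, via Tsirelson--Ibragimov--Sudakov or the Herbst argument with the log-Sobolev inequality. Write the dose noise as $\Sigma_D^{1/2}g$ with $g$ standard normal in $\R^K$. The map $g\mapsto\max_k(u_k+\mu_k+(\Sigma_D^{1/2}g)_k)$ is Lipschitz with constant $\max_k\|(\Sigma_D^{1/2})_{k\cdot}\|=\max_k(\Sigma_D)_{kk}^{1/2}$, since the max of Lipschitz functions has Lipschitz constant bounded by the largest one. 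Its centered version is therefore sub-Gaussian with proxy $\max_k(\Sigma_D)_{kk}$, and this is at most $\max_k(\bar\Sigma_D)_{kk}$ because $\bar\Sigma_D\succeq\Sigma_D$ implies domination of the diagonal entries. The control term $-X_{t,0}$ is independent and Gaussian with variance $\sigma_0^2\le\bar\sigma_0^2$. Multiplying the two conditional MGFs gives \eqref{eq:mgf-condition} with $v_t=\bar v_G$, for all $\lambda$ and in particular for $\lambda\ge0$. Conditioning is harmless because $U_{t-1}$ only shifts the function and does not change its Lipschitz constant.

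\textbf{Conclusion.} Theorem~\ref{thm:generic} now yields the e-process $\mathcal E^G_t(\delta)$ and the Ville bound, uniformly over $P\in H_0(\delta)$. Take $\delta=\delta_0$ and $n_t=t$. The monotonicity argument of the GO-rule subsection shows that $\mathcal E^G_t(\delta_0)\ge1/\alpha$ holds exactly when \eqref{eq:gaussian-decision} holds. Hence the event "GO at some look" is contained in the event $\{\sup_t\mathcal E_t^G\ge1/\alpha\}$, which has probability at most $\alpha$, and this containment holds at arbitrary data-dependent looks.
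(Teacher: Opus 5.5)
Your proposal is correct and follows the paper's proof essentially step for step. The steps are: coordinatewise monotonicity of the maximum to reduce the null drift to $\sprem_G(U_{t-1};\Sigma_D)$; conditional Jensen with an independent $N_K(0,\bar\Sigma_D-\Sigma_D)$ perturbation for covariance domination; Gaussian concentration for the dose maximum; and then Theorem~\ref{thm:generic} together with the monotone equivalence behind the GO rule. Your Lipschitz/Herbst form of Gaussian concentration gives the MGF bound with proxy $\max_k(\Sigma_D)_{kk}$ directly, and you multiply explicitly by the independent control MGF to reach $\bar v_G$. That is, if anything, slightly more precise than the paper's appeal to Borell's tail inequality, which leaves the control term implicit.
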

\begin{proof}
Given $\cF_{t-1}$,
\[
\Delta G_t(\delta)=\max_k(U_{t-1,k}+X_{t,k})-\max_kU_{t-1,k}-X_{t,0}-\delta .
\]
Let $\theta=\mu_0+\delta$ and $d_k=\mu_k-\theta=\delta_k-\delta\le0$. With $\varepsilon_t=X_{t,1:K}-\mu_{1:K}$,
\begin{align*}
\E\{\Delta G_t(\delta)\mid\cF_{t-1}\}
&=\E\max_k(U_{t-1,k}+d_k+\varepsilon_{t,k})-\max_kU_{t-1,k}\\
&\le \E\max_k(U_{t-1,k}+\varepsilon_{t,k})-\max_kU_{t-1,k}.
\end{align*}
The inequality uses coordinatewise monotonicity of the maximum. If $\bar\Sigma_D\succeq\Sigma_D$, write $\bar\varepsilon\stackrel{d}=\varepsilon_t+\zeta$ with $\zeta\sim N_K(0,\bar\Sigma_D-\Sigma_D)$ independent. Conditional Jensen's inequality for the convex maximum gives
\[
\E_\zeta\{\max_k(U_{t-1,k}+\varepsilon_{t,k}+\zeta_k)\mid\varepsilon_t\}
\ge \max_k(U_{t-1,k}+\varepsilon_{t,k}),
\]
so the bound is at most $\sprem_G(U_{t-1};\bar\Sigma_D)$.

For the second condition, the centered dose-maximum increment is the supremum of a finite
centered Gaussian process. By Borell's inequality for Gaussian
process suprema, it is sub-Gaussian with variance proxy
\(\sup_k \operatorname{Var}(\varepsilon_{t,k})\le \max_k(\bar\Sigma_D)_{kk}\).
\end{proof}

\subsection{Finite selection cost under a unique best Gaussian dose}
\begin{proposition}[Gaussian charge is summable after separation]\label{prop:gaussian-summable}
Under Assumption~\ref{ass:gaussian}, suppose a unique best dose $k^*$ has
$\mu_{k^*}>\max_{j\ne k^*}\mu_j$. Then
\begin{equation}\label{eq:gaussian-finite}
A_\infty^G:=\sum_{t=1}^{\infty}\sprem_G(U_{t-1};\bar\Sigma_D)<\infty
\qquad\text{almost surely}.
\end{equation}
Consequently, $A_t^G/n_t\to0$ almost surely.
\end{proposition}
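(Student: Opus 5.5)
The plan is to show that once the best dose separates, the premium $\sprem_G(U_{t-1};\bar\Sigma_D)$ is bounded by a quantity that decays exponentially in the gap between the leader's cumulative score and the runner-up's. The strong law of large numbers then makes that gap grow linearly in $t$, so the terms are eventually bounded by a geometric sequence.

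\textbf{Step 1: a deterministic tail bound on the premium.} Let $u\in\R^K$, let $m=\max_k u_k$ be attained at index $i$, and write $d_j=u_i-u_j\ge0$. Since $\max_k(u_k+\varepsilon_k)\ge u_i+\varepsilon_i$ and $\E\varepsilon_i=0$, we have
\[
\max_k(u_k+\varepsilon_k)-(u_i+\varepsilon_i)\le\sum_{j\ne i}(u_j+\varepsilon_j-u_i-\varepsilon_i)_+ .
\]
Taking expectations gives
\[
\sprem_G(u;\bar\Sigma_D)\le\sum_{j\ne i}\E(\xi_j-d_j)_+,
\qquad \xi_j:=\varepsilon_j-\varepsilon_i\sim N(0,\omega_{ij}^2),
\]
with $\omega_{ij}^2\le 4\max_k(\bar\Sigma_D)_{kk}$. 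The two-dose formula \eqref{eq:gaussian-two} gives $\E(\xi-d)_+=\omega\varphi(d/\omega)-d\{1-\Phi(d/\omega)\}\le\omega\varphi(d/\omega)$. Hence
\[
\sprem_G(u;\bar\Sigma_D)\le (K-1)\,\bar\omega\,\varphi(D(u)/\bar\omega),
\]
where $D(u):=\max_ku_k-\max_{j\ne\arg\max}u_j$ is the lead of the top coordinate and $\bar\omega$ bounds all the $\omega_{ij}$. Translation invariance lets us apply this bound directly to the observed $U_{t-1}$.

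\textbf{Step 2: the gap grows linearly.} By the strong law, $U_{t,k}/t\to\mu_k$ almost surely for each $k$. Set $g:=\mu_{k^*}-\max_{j\ne k^*}\mu_j>0$. Then almost surely there is a random $T$ such that for all $t\ge T$ the leader of $U_t$ is $k^*$ and $D(U_t)\ge gt/2$.

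\textbf{Step 3: summability.} For $t-1\ge T$, Step 1 gives
\[
\sprem_G(U_{t-1};\bar\Sigma_D)\le (K-1)\bar\omega\,\varphi\bigl(g(t-1)/(2\bar\omega)\bigr),
\]
and this bound is summable in $t$. The finitely many terms with $t-1<T$ are each finite, since $\sprem_G\le\E\max_k\varepsilon_k<\infty$ by the maximality at ties stated in \citet{de2026selection}, or directly by Jensen. So $A_\infty^G<\infty$ almost surely, and $A_t^G/n_t\le A_\infty^G/t\to0$.

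\textbf{Main obstacle.} The delicate point is Step 1: getting a bound on the multivariate, possibly correlated, premium that depends only on the lead $D(u)$. The pairwise reduction through $(\cdot)_+$ handles correlation cleanly, because each difference $\xi_j$ is univariate Gaussian whatever the correlation structure. The remaining steps are routine strong-law arguments; the paper's own "decay after a true leader emerges" result could substitute for Step 1 if one prefers to cite it.
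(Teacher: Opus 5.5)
Your proof is correct and follows the paper's own argument: dominate the premium by $\sum_{j\ne k^*}\E[(\varepsilon_j^G-\varepsilon_{k^*}^G-d_{t,j})_+]$, bound each term by $\omega_j\varphi(d_{t,j}/\omega_j)$, and use strong-law linear growth of the gaps to get a summable tail. Your extra details fill in what the paper leaves implicit: the uniform $\bar\omega$ (legitimate because $\omega\varphi(d/\omega)$ is increasing in $\omega$), finiteness of the early terms, and $A_t^G\le A_\infty^G$ from $\sprem_G\ge0$. One small correction: the bound $\sprem_G\le\E\max_k\varepsilon_k$ comes from $\max_k(u_k+\varepsilon_k)\le\max_ku_k+\max_k\varepsilon_k$, not from Jensen.
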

\begin{proof}
By the strong law, $U_{t,k^*}-U_{t,j}$ grows linearly and is eventually positive for every $j\ne k^*$. Once $k^*$ is the current leader, let $d_{t,j}=U_{t,k^*}-U_{t,j}$. For $\varepsilon^G\sim N_K(0,\bar\Sigma_D)$,
\[
\sprem_G(U_t;\bar\Sigma_D)
\le\sum_{j\ne k^*}\E[(\varepsilon_j^G-\varepsilon_{k^*}^G-d_{t,j})_+].
\]
If $\omega_j^2=\operatorname{Var}(\varepsilon_j^G-\varepsilon_{k^*}^G)$, then
$\E[(N(0,\omega_j^2)-d)_+]\le \omega_j\varphi(d/\omega_j)$ for $d>0$. Linear growth of $d_{t,j}$ therefore makes the displayed upper bound summable.
\end{proof}

\section{Binary outcomes}\label{sec:binary}

\subsection{Model and closed-form tie premium}
Binary outcomes cover response, remission, and other success-failure endpoints. Here the unit information update contains one new binary outcome from each dose and one from control, so the per-arm sample size is $n_t=t$.

\begin{assumption}[Binary shared-control blocks]\label{ass:binary}
At update $t$, $X_{t,k}\in\{0,1\}$ is observed from dose $k$ and $X_{t,0}\in\{0,1\}$ from control. Updates are independent. Within an update the $K+1$ outcomes are independent, with
$X_{t,k}\sim\operatorname{Bernoulli}(p_k)$ and $X_{t,0}\sim\operatorname{Bernoulli}(p_0)$.
\end{assumption}
As in Section~\ref{sec:intro}, define the cumulative dose and control counts by
\[
U_{t,k}:=\sum_{s\le t}X_{s,k},\qquad C_t:=\sum_{s\le t}X_{s,0}.
\]
Immediately before update $t$, let
\begin{equation}\label{eq:leaders}
\mathcal K_t:=\{k:U_{t-1,k}=\max_jU_{t-1,j}\},\qquad r_t:=|\mathcal K_t|,
\end{equation}
where $r_t$ is the number of tied leaders before update $t$.

Only a current leader can increase the dose maximum in one binary update. Define
\[
B_t:=\1\!\left\{\sum_{k\in \mathcal K_t}X_{t,k}\ge1\right\},
\]
the indicator that at least one current leader responds.

For candidate margin $\delta\in[-1,1]$, the equal-margin reference uses dose response probability $\theta$ and control probability $\theta-\delta$. Feasibility requires
\begin{equation}\label{eq:binary-feasible}
\theta\in\Theta(\delta):=[\max(0,\delta),\,\min(1,1+\delta)].
\end{equation}
The exact one-step selection premium is
\begin{equation}\label{eq:binary-premium}
\sprem_r^B(\theta):=\Pr(B_t=1\mid\cF_{t-1})-\theta
=1-(1-\theta)^r-\theta.
\end{equation}
This is the same object as the generic $\sprem_{\mathcal L}$ in \eqref{eq:phi-foundation}, specialized to a binary state with $r$ tied leaders. When $r=1$, $\sprem_1^B(\theta)=0$: a unique leader creates no immediate option value. When several leaders are tied, the maximum rises if any of them responds, and the option to switch leaders has positive value.

Because $\theta$ is unknown in routine use, define the nuisance-robust premium
\begin{equation}\label{eq:binary-robust}
\overline{\sprem}_r^B(\delta):=\sup_{\theta\in\Theta(\delta)}\sprem_r^B(\theta),\qquad
\bar A_t^B(\delta):=\sum_{s=1}^t\overline{\sprem}_{r_s}^B(\delta),\qquad
V_t^B:=t/2.
\end{equation}

The binary case is attractive operationally because the one-step charge is closed form. The only state variable needed for the selection premium is $r_t$, the number of current response-count leaders before the next update. The lemma below removes the remaining nuisance parameter by maximizing over the feasible boundary response probability.

\begin{lemma}[Closed-form binary charge]\label{lem:binary-closed}
For $r=1$, $\overline{\sprem}_1^B(\delta)=0$. For $r\ge2$, $\sprem_r^B$ is strictly concave on $[0,1)$ and has unconstrained maximizer
\[
\theta_r^*=1-r^{-1/(r-1)}.
\]
Let $\ell(\delta)=\max(0,\delta)$ and $u(\delta)=\min(1,1+\delta)$. Then
\begin{equation}\label{eq:binary-closed}
\overline{\sprem}_r^B(\delta)=\sprem_r^B\!\left(\min\{u(\delta),\max\{\ell(\delta),\theta_r^*\}\}\right).
\end{equation}
Moreover $h_r(\delta):=\delta+\overline{\sprem}_r^B(\delta)$ is nondecreasing on $[-1,1]$.
\end{lemma}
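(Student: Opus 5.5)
The plan has four parts: the trivial case $r=1$, concavity of $\sprem_r^B$ on $[0,1)$, the constrained maximizer via projection onto $\Theta(\delta)$, and monotonicity of $h_r$.

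\emph{The case $r=1$.} Here $\sprem_1^B(\theta)=1-(1-\theta)-\theta=0$ identically, so the supremum over $\Theta(\delta)$ is $0$.

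\emph{Concavity and the unconstrained maximizer, $r\ge2$.} Differentiate $\sprem_r^B(\theta)=1-(1-\theta)^r-\theta$ to get
\[
(\sprem_r^B)'(\theta)=r(1-\theta)^{r-1}-1,\qquad
(\sprem_r^B)''(\theta)=-r(r-1)(1-\theta)^{r-2}.
\]
The second derivative is strictly negative on $[0,1)$, which gives strict concavity there. By continuity at $\theta=1$, $\sprem_r^B$ is also concave on the closed interval $[0,1]$. Setting the first derivative to zero gives $(1-\theta)^{r-1}=1/r$, that is, $\theta_r^*=1-r^{-1/(r-1)}\in(0,1)$. Hence $\sprem_r^B$ is increasing on $[0,\theta_r^*]$ and decreasing on $[\theta_r^*,1]$.

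\emph{The constrained maximizer.} $\Theta(\delta)=[\ell(\delta),u(\delta)]$ is a nonempty subinterval of $[0,1]$ for every $\delta\in[-1,1]$. A unimodal function maximized over an interval attains its maximum at the projection of the mode onto that interval, which is $\min\{u,\max\{\ell,\theta_r^*\}\}$. This proves \eqref{eq:binary-closed}.

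\emph{Monotonicity of $h_r$.} This step needs the most care, because $\overline{\sprem}_r^B$ itself need not be monotone in $\delta$. Write $g(\delta)=\overline{\sprem}_r^B(\delta)$ and split $[-1,1]$ into regimes according to which branch of the projection is active.

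\begin{itemize}
\item For $\delta\le0$ we have $\ell=0$ and $u=1+\delta$. If $1+\delta\ge\theta_r^*$, then $g$ is the constant $\sprem_r^B(\theta_r^*)$ and $h_r=\delta+\text{const}$ increases. If $1+\delta<\theta_r^*$, then $g(\delta)=\sprem_r^B(1+\delta)$ is increasing in $\delta$ because $1+\delta$ lies on the increasing branch, so $h_r$ increases.
\item For $\delta\ge0$ we have $u=1$ and $\ell=\delta$. If $\delta\le\theta_r^*$, $g$ is constant and $h_r$ increases. If $\delta>\theta_r^*$, then $h_r(\delta)=\delta+\sprem_r^B(\delta)=1-(1-\delta)^r$, which is nondecreasing in $\delta$. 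Equivalently, $h_r'(\delta)=1+(\sprem_r^B)'(\delta)=r(1-\delta)^{r-1}\ge0$.
\end{itemize}

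The pieces agree at the breakpoints $\delta=\theta_r^*-1$, $0$ and $\theta_r^*$, since $g$ is continuous as the composition of continuous maps. The piecewise-nondecreasing parts therefore glue into a function that is nondecreasing on all of $[-1,1]$.

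The key fact in the last regime is that $1+(\sprem_r^B)'\ge0$ everywhere, which is what rescues $h_r$ where $g$ is decreasing. The remaining work is bookkeeping to confirm that the regimes cover $[-1,1]$ in the stated order, using $0<\theta_r^*<1$.
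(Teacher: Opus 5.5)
Your proposal is correct and follows essentially the paper's proof: you differentiate to get strict concavity and $\theta_r^*$, project $\theta_r^*$ onto $\Theta(\delta)$, and check $h_r$ on the four regimes split at $\theta_r^*-1$, $0$ and $\theta_r^*$, gluing the pieces by continuity. The differences are cosmetic. In the regime $1+\delta<\theta_r^*$ you use that $1+\delta$ lies on the increasing branch, whereas the paper computes $h_r(\delta)=-(-\delta)^r$ explicitly. You also leave implicit the trivial case $h_1(\delta)=\delta$, which the paper states outright.
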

\begin{proof}
Differentiate:
$\{\sprem_r^B\}'(\theta)=r(1-\theta)^{r-1}-1$ and
$\{\sprem_r^B\}''(\theta)=-r(r-1)(1-\theta)^{r-2}\le0$, with strict inequality on $[0,1)$ for $r\ge2$. Therefore the constrained maximizer is the projection of $\theta_r^*$ onto $\Theta(\delta)$, giving \eqref{eq:binary-closed}. For monotonicity, first note that $h_1(\delta)=\delta$. For $r\ge2$, consider the feasible interval separately. If $\delta\ge0$, then $\Theta(\delta)=[\delta,1]$. When $\delta\le\theta_r^*$, the optimizer is interior and $h_r'(\delta)=1$; when $\delta>\theta_r^*$, the optimizer is $\theta=\delta$ and
\[
h_r(\delta)=1-(1-\delta)^r,
\]
whose derivative is $r(1-\delta)^{r-1}\ge0$. If $\delta<0$, then $\Theta(\delta)=[0,1+\delta]$. When $1+\delta\ge\theta_r^*$, again $h_r'(\delta)=1$; otherwise the optimizer is $\theta=1+\delta$ and
\[
h_r(\delta)=-(-\delta)^r,
\]
whose derivative is $r(-\delta)^{r-1}\ge0$. Continuity at the case boundaries completes the proof.
\end{proof}

\subsection{Exact reference bias and anytime-valid test}
The binary bias identity has the same interpretation as the Gaussian one, but it is even more transparent: ties among current response-count leaders create the winner's curse, and a unique leader creates no one-step binary selection charge.

\begin{proposition}[Binary equal-margin bias identity]\label{prop:binary-bias}
Under Assumption~\ref{ass:binary}, suppose $p_1=\cdots=p_K=\theta$ and $p_0=\theta-\delta$. Define
$A_t^B(\theta):=\sum_{s=1}^t\sprem_{r_s}^B(\theta)$. Then
\begin{equation}\label{eq:binary-exact-bias}
\E\{\widehat\delta_{\max,t}-\delta\}=\frac1{n_t}\E\{A_t^B(\theta)\}
\end{equation}
for every fixed look $t$, and
\begin{equation}\label{eq:binary-stop}
\E\{\max_kU_{\tau,k}-C_\tau-n_\tau\delta\}=\E\{A_\tau^B(\theta)\}
\end{equation}
for bounded stopping times $\tau$; in the unit-block model $n_\tau=\tau$.
\end{proposition}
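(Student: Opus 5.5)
The plan is to argue directly, by a one-step compensator (Doob decomposition) computation, rather than quoting the general identity from \citet{de2026selection}; the binary structure makes the conditional drift explicit. Under the stated reference, $p_k=\theta$ for all doses and $p_0=\theta-\delta$. Write
\[
G_t:=\max_kU_{t,k}-C_t-t\delta=M^U_t-C_t-t\delta,\qquad M^U_t:=\max_kU_{t,k}.
\]
The first step is the pathwise identity $\Delta M^U_t=B_t$. Since each $X_{t,k}\in\{0,1\}$, a non-leader $j\notin\mathcal K_t$ has $U_{t-1,j}\le M^U_{t-1}-1$, so $U_{t,j}\le M^U_{t-1}$. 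A leader reaches $M^U_{t-1}+1$ exactly when it responds. Hence $M^U_t=M^U_{t-1}+B_t$.

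The second step computes the conditional drift. The set $\mathcal K_t$ is $\cF_{t-1}$-measurable, and the update-$t$ outcomes are independent of $\cF_{t-1}$ and of each other. Therefore
\[
\Pr(B_t=1\mid\cF_{t-1})=1-(1-\theta)^{r_t}.
\]
Combined with $\E(X_{t,0}\mid\cF_{t-1})=\theta-\delta$, this gives
\[
\E\{\Delta G_t\mid\cF_{t-1}\}=1-(1-\theta)^{r_t}-(\theta-\delta)-\delta=\sprem^B_{r_t}(\theta)
\]
by \eqref{eq:binary-premium}. Consequently $N_t:=G_t-A^B_t(\theta)$ is a martingale with $N_0=0$. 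Its increments are bounded, since $|\Delta G_t|\le 1+1+|\delta|$ and $0\le\sprem^B_r\le1$.

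The third step takes expectations. At a fixed look, $\E N_t=0$ gives $\E G_t=\E A^B_t(\theta)$. Dividing by $n_t=t$ and using $G_t=n_t(\widehat\delta_{\max,t}-\delta)$ yields \eqref{eq:binary-exact-bias}. For a bounded stopping time $\tau\le T$, optional stopping applies directly to the martingale $N$, because $N_{t\wedge\tau}$ is a martingale bounded by $T\cdot\text{const}$. This gives $\E N_\tau=0$, which is \eqref{eq:binary-stop}.

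I expect the only real obstacle to be the first step: justifying that only current leaders can raise the maximum and that the maximum rises by exactly one. This is where integrality of the counts and $\{0,1\}$-valued outcomes are essential. The remaining steps are routine martingale bookkeeping. As a cross-check, the result should also follow from the Gaussian route of Proposition~\ref{prop:gaussian-bias}: translation invariance identifies the generic $\sprem_{\mathcal L}$ at state $U_{t-1}$ with $\sprem^B_{r_t}(\theta)$, after which the identities of \citet{de2026selection} apply and the centered control sum is subtracted.
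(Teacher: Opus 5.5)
Your proof is correct and follows the same compensator argument as the paper. The one-step conditional drift of the score is $\sprem^B_{r_t}(\theta)$, so summing the predictable increments and subtracting the control contribution gives the fixed-look identity. The difference is that your version is self-contained: you prove $\Delta M^U_t=B_t$ from integrality of the counts, and you obtain the stopped identity by optional stopping for $N_t=G_t-A^B_t(\theta)$, whereas the paper cites the stopped selection-premium identity of \citet{de2026selection}.
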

\begin{proof}
At the equal-margin reference, the centered cumulative dose counts form a random walk whose one-step maximum increment has conditional mean $\sprem_{r_t}^B(\theta)$. The fixed-time identity follows by summing these predictable increments and subtracting the centered control contribution. The stopped identity follows from the stopped version of the same selection-premium identity for bounded stopping times \citep{de2026selection}.
\end{proof}

\begin{theorem}[Binary anytime-valid global test]\label{thm:binary}
Under Assumption~\ref{ass:binary}, for every candidate $\delta\in[-1,1]$ and every configuration with $p_k-p_0\le\delta$ for all $k$,
\begin{equation}\label{eq:binary-drift}
\E\{\Delta G_t(\delta)\mid\cF_{t-1}\}
\le\overline{\sprem}_{r_t}^B(\delta).
\end{equation}
The centered increment is conditionally sub-Gaussian with variance proxy $1/2$. Hence Theorem~\ref{thm:generic}, with $A_t=\bar A_t^B(\delta)$ and $V_t=t/2$, gives an e-process $\mathcal E_t^B(\delta)$ and an anytime-valid level-$\alpha$ global test. At the clinical margin,
\begin{equation}\label{eq:binary-decision}
\boxed{\quad
\widehat\delta_{\max,t}
\ge\delta_0+\frac{\bar A_t^B(\delta_0)}{n_t}+\frac{q_\alpha(t/2;\rho)}{n_t},\qquad n_t=t.
\quad}
\end{equation}
\end{theorem}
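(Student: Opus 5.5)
The plan is to verify the two hypotheses of Theorem~\ref{thm:generic}, the drift condition \eqref{eq:drift-condition} and the MGF condition \eqref{eq:mgf-condition}, with $a_t(\delta)=\overline{\sprem}_{r_t}^B(\delta)$ and $v_t=1/2$. After that, both the e-process and the boxed rule \eqref{eq:binary-decision} follow from \eqref{eq:ville} and the GO-rule equivalence \eqref{eq:generic-decision}. Predictability is immediate: $r_t$ is a function of $U_{t-1}$, so $\overline{\sprem}_{r_t}^B(\delta)$ is $\cF_{t-1}$-measurable, and $v_t=1/2$ is deterministic.

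\textbf{Structure of the increment.} With binary outcomes, the dose maximum rises by exactly one if some current leader responds and is unchanged otherwise. Non-leaders trail by at least one count, so they can at most tie. Hence
\[
\Delta G_t(\delta)=B_t-X_{t,0}-\delta .
\]
Conditionally on $\cF_{t-1}$, $B_t$ and $X_{t,0}$ are independent. Moreover,
\[
\E(B_t\mid\cF_{t-1})=1-\prod_{k\in\mathcal K_t}(1-p_k).
\]

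\textbf{Drift bound.} Let $\bar\theta:=\max_{k\in\mathcal K_t}p_k$. By monotonicity of the product,
\[
\E\{\Delta G_t(\delta)\mid\cF_{t-1}\}\le 1-(1-\bar\theta)^{r_t}-p_0-\delta .
\]
Rather than substituting $\theta=p_0+\delta$ directly, which could fall outside $[0,1]$, I would introduce the effective margin $\delta':=\bar\theta-p_0$. Under the null, $\delta'\le\delta$ because $p_k-p_0\le\delta$ for all $k$. Also $\bar\theta\in\Theta(\delta')$, since both $\bar\theta$ and $\bar\theta-\delta'=p_0$ lie in $[0,1]$, and $\delta'\in[-1,1]$. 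Rewriting the bound as
\[
\sprem_{r_t}^B(\bar\theta)+\delta'-\delta
\]
and using the definition of the supremum gives
\[
\sprem_{r_t}^B(\bar\theta)+\delta'-\delta\le h_{r_t}(\delta')-\delta .
\]
The monotonicity of $h_r$ from Lemma~\ref{lem:binary-closed} then gives
\[
h_{r_t}(\delta')-\delta\le h_{r_t}(\delta)-\delta=\overline{\sprem}_{r_t}^B(\delta),
\]
which is \eqref{eq:binary-drift}. The case $r_t=1$ is covered by the same argument with $h_1(\delta)=\delta$.

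\textbf{Sub-Gaussian bound.} The centered increment equals $(B_t-\E B_t)-(X_{t,0}-p_0)$, a sum of two conditionally independent centered variables, each supported in an interval of length one. By Hoeffding's lemma, each is sub-Gaussian with variance proxy $1/4$, so the sum is sub-Gaussian with proxy $1/2$. This gives \eqref{eq:mgf-condition} with $v_t=1/2$, hence $V_t=t/2$.

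\textbf{Main obstacle.} The delicate step is the drift bound over the whole composite null. It must handle heterogeneous leader probabilities and boundary cases in which $p_0+\delta$ is infeasible. Reducing to the effective margin $\delta'$ and invoking the monotonicity of $h_r$ is the key move, and I expect it to resolve both issues at once.
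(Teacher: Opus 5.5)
Your proof is correct. The increment identity $\Delta G_t(\delta)=B_t-X_{t,0}-\delta$ and the Hoeffding sub-Gaussian step match the paper's; the drift bound takes a different route.

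\textbf{The paper's drift argument.} The paper compares the current leaders directly with the boundary rate $\theta=p_0+\delta$ and splits into two cases.
\begin{itemize}
\item If $p_0+\delta\le 1$, the null forces $p_0+\delta\ge 0$, so $\theta\in\Theta(\delta)$. The inequality $1-\prod_{k\in\mathcal K_t}(1-p_k)\le 1-(1-\theta)^{r_t}$ then gives $\sprem^B_{r_t}(\theta)\le\overline{\sprem}^B_{r_t}(\delta)$ in one line.
\item If $p_0+\delta>1$, the conditional mean is at most $1-p_0-\delta<0\le\overline{\sprem}^B_{r_t}(\delta)$.
\end{itemize}

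\textbf{Your drift argument.} You work instead with the largest true response probability $\bar\theta$ among the current leaders. This rate is automatically feasible for the effective margin $\delta'=\bar\theta-p_0\le\delta$. You then raise $\delta'$ to $\delta$ using the monotonicity of $h_r(\delta)=\delta+\overline{\sprem}^B_r(\delta)$ from Lemma~\ref{lem:binary-closed}.

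\textbf{What each route buys.} The paper's version is more elementary. It needs only the monotonicity of $\theta\mapsto 1-(1-\theta)^r$, and uses the lemma only for the closed-form charge. Your version removes the case split. It also produces the intermediate bound $h_{r_t}(\delta')-\delta$, which shows explicitly how much slack the charge has when the current leaders sit strictly below the margin. The price is that it depends on the piecewise monotonicity argument for $h_r$, which the paper otherwise invokes only to invert the tests into the lower confidence sequence and in the consistency corollary.
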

\begin{proof}
Given $\cF_{t-1}$, the score increment is
$\Delta G_t(\delta)=B_t-X_{t,0}-\delta$. If $p_0+\delta\le1$, the null configuration is feasible only when $p_0+\delta\ge0$; otherwise $p_k\le p_0+\delta<0$ for all $k$, contradicting $p_k\ge0$. Thus $\theta=p_0+\delta\in\Theta(\delta)$. Since $p_k\le p_0+\delta$ under the null,
\begin{align*}
\E\{\Delta G_t(\delta)\mid\cF_{t-1}\}
&=1-\prod_{k\in \mathcal K_t}(1-p_k)-p_0-\delta\\
&\le1-(1-p_0-\delta)^{r_t}-(p_0+\delta)\\
&=\sprem_{r_t}^B(\theta)\le\overline{\sprem}_{r_t}^B(\delta).
\end{align*}
If $p_0+\delta>1$, the conditional mean is at most $1-p_0-\delta<0\le\overline{\sprem}_{r_t}^B(\delta)$. The centered variables $B_t-\E(B_t\mid\cF_{t-1})$ and $X_{t,0}-p_0$ are independent and each is $1/4$-sub-Gaussian by Hoeffding's lemma. Their difference has proxy $1/2$. Apply Theorem~\ref{thm:generic}.
\end{proof}

\begin{proposition}[Finite binary selection charge under a unique best dose]\label{prop:binary-finite}
If one binary dose has strictly larger response probability than every other dose, then for every fixed candidate $\delta$,
\[
\bar A_\infty^B(\delta)<\infty\qquad\text{almost surely}.
\]
\end{proposition}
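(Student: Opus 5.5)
The plan is to reduce the statement to a finiteness claim about the number of updates at which the current response-count maximum is tied. By Lemma~\ref{lem:binary-closed}, $\overline{\sprem}_1^B(\delta)=0$. For every $r\ge2$ we have $0\le\overline{\sprem}_r^B(\delta)\le\sup_{\theta\in[0,1]}\{1-(1-\theta)^r-\theta\}\le1$; the bound $\le1$ is crude but sufficient. Hence
\[
\bar A_\infty^B(\delta)=\sum_{t\ge1}\overline{\sprem}_{r_t}^B(\delta)\le\#\{t\ge1:r_t\ge2\}\cdot\max_{2\le r\le K}\overline{\sprem}_r^B(\delta),
\]
and the constant on the right is finite because it is a maximum over finitely many values of $r$. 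It therefore suffices to show that $r_t\ge2$ for only finitely many $t$, almost surely.

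To prove that, let $k^*$ be the dose with $p_{k^*}>\max_{j\ne k^*}p_j$ and set $\gamma:=p_{k^*}-\max_{j\ne k^*}p_j>0$. For each $j\ne k^*$, the difference $U_{t,k^*}-U_{t,j}$ is a sum of i.i.d.\ bounded increments $X_{s,k^*}-X_{s,j}$ with mean $p_{k^*}-p_j\ge\gamma$. By the strong law of large numbers, $(U_{t,k^*}-U_{t,j})/t\to p_{k^*}-p_j>0$ almost surely. Intersecting these $K-1$ probability-one events gives a random but almost surely finite time $T$ such that, for all $t\ge T$ and all $j\ne k^*$, $U_{t,k^*}>U_{t,j}$. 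For $t\ge T+1$ the state before update $t$ then has the unique leader $k^*$, so $\mathcal K_t=\{k^*\}$ and $r_t=1$. Every term of the series with $t>T$ is therefore zero, and the sum reduces to at most $T$ terms, each bounded by $1$. This gives $\bar A_\infty^B(\delta)\le T<\infty$ almost surely.

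Two remarks close the argument. First, the conclusion holds for every fixed $\delta$, including values where $\Theta(\delta)$ is degenerate: the bound on the charges does not depend on $\delta$, and the exceptional null set comes from the strong law alone, which does not involve $\delta$. Second, as in Proposition~\ref{prop:gaussian-summable}, one obtains $\bar A_t^B(\delta)/n_t\to0$ almost surely.

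The main subtlety, though it is mild, is the indexing of the leader set. $\mathcal K_t$ is defined through $U_{t-1}$, so the separation must hold at time $t-1$; this only shifts $T$ by one. Beyond that, the argument needs just the strong law and the fact, from Lemma~\ref{lem:binary-closed}, that the charge vanishes exactly when the leader is unique. If a quantitative version were wanted, one could instead bound $\E\#\{t:r_t\ge2\}\le\sum_t\sum_{j\ne k^*}\Pr(U_{t-1,j}\ge U_{t-1,k^*})$ by Hoeffding's inequality. That bound is geometrically summable, which gives $\E\bar A_\infty^B(\delta)<\infty$ and hence the almost-sure statement. I would present the strong-law argument as the main proof.
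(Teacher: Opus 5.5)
Your proof is correct and follows the same route as the paper: the strong law makes the best dose's cumulative count eventually stay strictly ahead of every other count, after which $r_t=1$ and every remaining term is $\overline{\sprem}_1^B(\delta)=0$. You also make explicit two points the paper leaves implicit, namely that each of the finitely many earlier charges is bounded by $1$ and that the leader set is indexed by $U_{t-1}$; your Hoeffding variant is a valid optional strengthening to $\E\bar A_\infty^B(\delta)<\infty$.
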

\begin{proof}
The strong law implies that the cumulative response count of the unique best dose eventually exceeds every other dose count and remains ahead. Thereafter $r_t=1$ for all sufficiently large $t$, and $\overline{\sprem}_1^B(\delta)=0$.
\end{proof}

\subsection{Exact equal-margin correction and heterogeneous-effect envelope}
\label{sec:bias-envelope}

Propositions~\ref{prop:gaussian-bias} and \ref{prop:binary-bias} give the
exact winner-optimism correction at the equal-effect reference. At that
reference, all active doses have the same dose--control effect $\delta$, so the
excess of the selected maximum is pure selection bias rather than a consequence
of genuine dose separation. 

The same quantity also gives a conservative envelope away from the equal-margin
case. For the actual trial, the dose effects may be heterogeneous. Define the
endpoint-specific robust charge $A_t(\delta)$ as in Sections~\ref{sec:gaussian}
and \ref{sec:binary}. At the unknown best effect $\delta_*$, every dose satisfies
$\delta_k\le\delta_*$. Thus the drift calculations used in the Gaussian and
binary e-process constructions imply the following upper bound.

\begin{proposition}[Heterogeneous best-effect bias envelope]\label{prop:hetero-envelope}
For either endpoint model and every fixed look $t$,
\begin{equation}\label{eq:hetero-envelope}
\E\left[n_t\{\widehat\delta_{\max,t}-\delta_*\}\right]
\le
\E\{A_t(\delta_*)\}.
\end{equation}
The same inequality holds at bounded stopping times on the cumulative score scale.
\end{proposition}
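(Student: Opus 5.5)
The plan is to apply the drift bounds already proved in Theorems~\ref{thm:gaussian} and \ref{thm:binary} at the particular candidate $\delta=\delta_*$, and then take expectations of a telescoping sum. The key point is that $\delta_*$ is a fixed, deterministic parameter of the true distribution. Since every dose satisfies $\delta_k\le\delta_*$, the true law lies in $H_0(\delta_*)$, so those drift inequalities apply to it without any further argument. Concretely, I take $A_t(\delta_*)=A_t^G$ in the Gaussian case; this requires only the prespecified dominating bounds $\bar\Sigma_D\succeq\Sigma_D$ and $\bar\sigma_0^2\ge\sigma_0^2$, and the charge does not depend on $\delta$ at all. In the binary case I take $A_t(\delta_*)=\bar A_t^B(\delta_*)$, with $\delta_*\in[-1,1]$ automatically.

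First, I write the cumulative score as a telescoping sum. Since $G_0(\delta_*)=0$,
\[
n_t\{\widehat\delta_{\max,t}-\delta_*\}=G_t(\delta_*)=\sum_{s=1}^t\Delta G_s(\delta_*).
\]
Second, by \eqref{eq:gaussian-drift} or \eqref{eq:binary-drift} at $\delta=\delta_*$, each term satisfies $\E\{\Delta G_s(\delta_*)\mid\cF_{s-1}\}\le a_s(\delta_*)$, where $a_s$ is the predictable one-step charge. Third, I take expectations and use the tower property:
\[
\E\{\Delta G_s(\delta_*)\}\le\E\{a_s(\delta_*)\}.
\]
Integrability is not a problem. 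In the Gaussian case the increments have Gaussian tails and $0\le\sprem_G\le\E\max_k\varepsilon^G_k<\infty$ by translation invariance and maximality at ties. In the binary case everything is bounded. Summing over $s\le t$ then gives \eqref{eq:hetero-envelope}.

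For a bounded stopping time $\tau\le T$, I define the process $D_t:=G_t(\delta_*)-A_t(\delta_*)$. The previous step shows that $D_t$ is an integrable supermartingale with $D_0=0$, so optional stopping gives $\E D_\tau\le 0$. Equivalently, I write
\[
D_\tau=\sum_{s=1}^T\1\{\tau\ge s\}\bigl(\Delta G_s-a_s\bigr)
\]
and note that $\{\tau\ge s\}\in\cF_{s-1}$. Conditioning each summand on $\cF_{s-1}$ then shows that it has nonpositive expectation. This yields $\E\{\max_kU_{\tau,k}-C_\tau-n_\tau\delta_*\}\le\E\{A_\tau(\delta_*)\}$, which is the cumulative-score form claimed.

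The main obstacle is ensuring that the drift bounds of the earlier theorems really hold for the \emph{true} heterogeneous law rather than only for the equal-margin reference. This is exactly what the monotonicity step in the proof of Theorem~\ref{thm:gaussian}, together with the inequality chain in the proof of Theorem~\ref{thm:binary}, provides. The one thing to check carefully is the binary edge case $p_0+\delta_*>1$, where the conditional drift is negative and the bound holds trivially. Because $\delta_*$ is deterministic and not data-dependent, no uniformity over $\delta$ is needed.
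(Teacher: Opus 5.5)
Your proposal is correct and follows essentially the same route as the paper. You evaluate the Gaussian or binary drift bound at $\delta=\delta_*$, which is legitimate because the true law lies on the boundary of $H_0(\delta_*)$. You then sum the conditional bounds and take expectations for fixed $t$, and apply optional sampling to the supermartingale $G_t(\delta_*)-A_t(\delta_*)$ for bounded stopping times; your integrability remarks supply details the paper leaves implicit.

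One small simplification: the binary edge case $p_0+\delta_*>1$ that you flag cannot occur, since $p_0+\delta_*=\max_k p_k\le 1$.
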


\begin{proof}
At $\delta=\delta_*$, the composite null inequalities $\delta_k\le\delta_*$
hold for all doses. Summing the drift bounds \eqref{eq:gaussian-drift} or
\eqref{eq:binary-drift} over $s\le t$ gives
\[
\E\{G_t(\delta_*)\}\le \E\{A_t(\delta_*)\}.
\]
Since $G_t(\delta_*)=n_t\{\widehat\delta_{\max,t}-\delta_*\}$, the fixed-time
inequality follows. For bounded stopping times, apply the same argument to the
stopped residual decomposition and use optional sampling.
\end{proof}

\section{Anytime lower confidence sequence for the best effect}\label{sec:cs}

For each candidate $\delta$, compute the corresponding e-process $\mathcal E_t(\delta)$. The rejected candidate values are those for which the data provide anytime-valid evidence that $\delta_*>\delta$.

Testing one candidate margin is useful for a GO/no-GO decision. Inverting all candidate margins gives a more interpretable object: an anytime lower floor for the best effect available in the tested dose range.

\begin{theorem}[Anytime lower confidence sequence]\label{thm:cs}
Assume either the Gaussian or binary model. Define
\begin{equation}\label{eq:lower-cs}
\underline\delta_t:=\sup\{\delta\in\mathcal D:\mathcal E_t(\delta)\ge1/\alpha\},
\end{equation}
where $\mathcal D=\R$ for Gaussian outcomes and $\mathcal D=[-1,1]$ for the displayed binary risk-difference analysis, with the supremum of the empty set taken as the lower endpoint of $\mathcal D$. Then
\begin{equation}\label{eq:cs-coverage}
\Pr\{\underline\delta_t\le\delta_*\text{ for every }t\ge1\}\ge1-\alpha.
\end{equation}
\end{theorem}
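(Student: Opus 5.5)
The plan is the usual duality between a family of tests and a confidence set. Monotonicity in $\delta$ will turn the pointwise-in-$\delta$ guarantee of Theorem~\ref{thm:generic} into a statement about the supremum in \eqref{eq:lower-cs}. The key observation is that the true distribution $P$ lies in $H_0(\delta_*)$, because $\delta_k\le\delta_*$ for every $k$. So Theorem~\ref{thm:gaussian} or Theorem~\ref{thm:binary}, applied at the single candidate $\delta=\delta_*$, gives
\[
\Pr_P\Bigl\{\sup_{t\ge1}\mathcal E_t(\delta_*)\ge1/\alpha\Bigr\}\le\alpha .
\]
In the binary case we also note $\delta_*\in[-1,1]=\mathcal D$. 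It then suffices to show the event inclusion
\[
\{\exists t:\underline\delta_t>\delta_*\}\subseteq\Bigl\{\sup_t\mathcal E_t(\delta_*)\ge1/\alpha\Bigr\}.
\]

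The inclusion follows if, for each fixed $t$ and sample path, the map $\delta\mapsto\mathcal E_t(\delta)$ is nonincreasing. Indeed, if $\underline\delta_t>\delta_*$, the definition of a supremum gives some $\delta'>\delta_*$ with $\mathcal E_t(\delta')\ge1/\alpha$. Monotonicity then gives $\mathcal E_t(\delta_*)\ge\mathcal E_t(\delta')\ge1/\alpha$. If the set in \eqref{eq:lower-cs} is empty, $\underline\delta_t$ is the lower endpoint of $\mathcal D$ and is automatically $\le\delta_*$.

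To prove monotonicity I would argue in two steps.

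\emph{Monotonicity in $R$.} $V_t$ does not depend on $\delta$: it equals $t\bar v_G$ or $t/2$. Moreover, $\mathcal E_t$ is the mixture $\int_0^\infty\exp\{\lambda R-\lambda^2V_t/2\}\pi_\rho(\lambda)\,d\lambda$. Each integrand is nondecreasing in $R$ because $\lambda\ge0$, so $\mathcal E_t$ is nondecreasing in $R_t(\delta)$ on all of $\R$, not only on the nonnegative side.

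\emph{Monotonicity of $R_t$ in $\delta$.} It remains to show that $R_t(\delta)=\max_kU_{t,k}-C_t-\{n_t\delta+A_t(\delta)\}$ is nonincreasing in $\delta$. In the Gaussian case, $A_t^G$ does not involve $\delta$ at all: it is computed from $U_{s-1}$ and $\bar\Sigma_D$ by translation invariance. Hence $R_t$ has slope $-n_t<0$. In the binary case, with $n_t=t$,
\[
n_t\delta+\bar A_t^B(\delta)=\sum_{s=1}^t\bigl\{\delta+\overline{\sprem}^B_{r_s}(\delta)\bigr\}=\sum_{s=1}^t h_{r_s}(\delta).
\]
The tie counts $r_s$ are determined by the data and do not depend on $\delta$. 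Each $h_r$ is nondecreasing by Lemma~\ref{lem:binary-closed}, so $R_t$ is again nonincreasing in $\delta$.

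The binary monotonicity is the only nontrivial point, and it is exactly why the paper proved that $h_r$ is nondecreasing. The robust charge $\overline{\sprem}^B_r(\delta)$ can itself decrease in $\delta$, so the argument must use the combined quantity $\delta+\overline{\sprem}^B_r(\delta)$ rather than the charge alone.

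Combining the two steps with Ville's inequality \eqref{eq:ville} at $\delta_*$ gives $\Pr\{\exists t:\underline\delta_t>\delta_*\}\le\alpha$, which is \eqref{eq:cs-coverage}. No union over $\delta$ is needed, since only the single true value $\delta_*$ enters.
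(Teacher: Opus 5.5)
Your proposal is correct and follows the paper's argument: coverage can fail only if the test at the true value $\delta_*$ rejects at some time, and Theorems~\ref{thm:gaussian} and \ref{thm:binary} bound that probability by $\alpha$. The paper's one-line proof uses the monotonicity of $\delta\mapsto\mathcal E_t(\delta)$ only implicitly; it is stated in the remark after the theorem via Lemma~\ref{lem:binary-closed}. Your explicit event inclusion, together with the mixture argument that $\mathcal E_t$ is nondecreasing in $R$ on all of $\R$ (not just on the nonnegative side), fills in that step cleanly.
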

\begin{proof}
If $\underline\delta_t>\delta_*$ for some $t$, then the level-$\alpha$ test has rejected the true null $H_0(\delta_*)$ at some time. The probability of this event is at most $\alpha$ by Theorems~\ref{thm:gaussian} or \ref{thm:binary}.
\end{proof}

For Gaussian outcomes, $\mathcal E_t(\delta)$ is nonincreasing in $\delta$ because the charge is independent of $\delta$. For binary outcomes, $n_t\delta+\bar A_t^B(\delta)=\sum_{s=1}^t\{\delta+\overline{\sprem}_{r_s}^B(\delta)\}$ is nondecreasing by Lemma~\ref{lem:binary-closed}; hence $\mathcal E_t(\delta)$ is also nonincreasing in $\delta$. Thus \eqref{eq:lower-cs} can be obtained by one-dimensional bisection. The global GO rule at the clinical margin is simply
\[
\underline\delta_t\ge\delta_0.
\]

The finite-sample rule is conservative early because it pays for both selection and repeated monitoring. The asymptotic result below explains why this cost is transient when one dose is truly best: the selection ledger per subject vanishes, the monitoring boundary per subject vanishes, and the lower floor approaches the true best effect.

\begin{corollary}[Consistency and evidence growth]\label{cor:consistency}
Suppose a unique best dose exists and $\eta:=\delta_* -\delta_0>0$. Under either endpoint model,
\[
\underline\delta_t\to\delta_*\quad\text{almost surely},
\]
and the global test crosses in finite time almost surely. In the Gaussian model,
\[
\frac1{n_t}\log\mathcal E_t^G(\delta_0)\to\frac{\eta^2}{2\bar v_G}
\qquad\text{almost surely};
\]
in the binary model,
\[
\frac1{n_t}\log\mathcal E_t^B(\delta_0)\to\eta^2
\qquad\text{almost surely}.
\]
\end{corollary}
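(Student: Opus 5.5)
The proof works directly with the closed form \eqref{eq:eprocess} and the residual $R_t(\delta)=G_t(\delta)-A_t(\delta)$. First, for any fixed $\delta$, the strong law (applied armwise and to the control) gives $\widehat\delta_{\max,t}\to\delta_*$ almost surely, since the maximum of finitely many a.s.\ convergent averages converges to the maximum of the limits. Next, the selection charge is $o(n_t)$. In the Gaussian model the charge does not depend on $\delta$, and Proposition~\ref{prop:gaussian-summable} gives $A^G_\infty<\infty$ a.s. In the binary model, Proposition~\ref{prop:binary-finite} shows that eventually $r_t=1$. Since $\overline{\sprem}^B_1\equiv0$, a single random time $T$ exists after which no charge accrues for \emph{any} $\delta$. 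Hence $\sup_{\delta}\bar A_t^B(\delta)\le T$, which is a.s.\ finite. Thus in both models $A_t(\delta)/n_t\to0$ a.s., uniformly in $\delta$ (at least on compacts), and so
\[
R_t(\delta)/n_t\to\delta_*-\delta \quad\text{a.s.}
\]
with $V_t=n_t\bar v$, where $\bar v=\bar v_G$ or $1/2$.

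\emph{Evidence growth.} At $\delta=\delta_0$ we have $R_t\sim\eta n_t\to\infty$ and $V_t+\rho\sim \bar v n_t$. Taking logs in \eqref{eq:eprocess}, the prefactor $\sqrt{\rho/(V_t+\rho)}$ contributes $O(\log n_t)$, and the term $\log[1+\erf(\cdot)]\to\log2$ is bounded. Therefore
\[
\frac1{n_t}\log\mathcal E_t(\delta_0)=\frac{R_t^2}{2n_t(V_t+\rho)}+o(1)\to\frac{\eta^2}{2\bar v},
\]
which equals $\eta^2/(2\bar v_G)$ in the Gaussian model and $\eta^2$ in the binary model, where $\bar v=1/2$. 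Since this limit is positive, $\mathcal E_t(\delta_0)\to\infty$, so it eventually exceeds $1/\alpha$. The test therefore crosses in finite time a.s.

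\emph{Consistency of $\underline\delta_t$.} For the lower side, fix $\epsilon>0$ and apply the same argument at $\delta=\delta_*-\epsilon$ (for binary, restrict to $\delta_*-\epsilon\in[-1,1]$, which holds for small $\epsilon$ since $\delta_*>\delta_0\ge-1$). This gives $\mathcal E_t(\delta_*-\epsilon)\to\infty$, so eventually $\underline\delta_t\ge\delta_*-\epsilon$. For the upper side, at $\delta=\delta_*+\epsilon$ we have $R_t\to-\infty$ linearly. The e-process is then bounded by its value at $R_t\le0$, which is at most $\sqrt{\rho/(V_t+\rho)}\cdot 2e^{R_t^2/(2(V_t+\rho))}$. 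This bound is not automatically small, so I would instead use the half-normal integral directly: for $R_t<0$ the integrand $\exp\{\lambda R_t-(V_t+\rho)\lambda^2/2\}\le 1$ on $\lambda\ge0$ and tends to $0$ pointwise. Hence $\mathcal E_t(\delta_*+\epsilon)\to0<1/\alpha$. Monotonicity of $\mathcal E_t(\delta)$ in $\delta$, established after Theorem~\ref{thm:cs}, then gives $\mathcal E_t(\delta)<1/\alpha$ for all $\delta\ge\delta_*+\epsilon$, so $\underline\delta_t\le\delta_*+\epsilon$ eventually. Intersecting over rational $\epsilon$ yields $\underline\delta_t\to\delta_*$ a.s.

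\emph{Main obstacle.} The delicate step is making the charge negligibility uniform in $\delta$, so that monotonicity can be combined with pointwise limits. For binary this follows from the eventual $r_t=1$ argument above. For Gaussian it is immediate because the charge is $\delta$-free. A secondary point is that Proposition~\ref{prop:gaussian-summable} requires a unique best dose mean, which matches the corollary's hypothesis since all doses share the same control.
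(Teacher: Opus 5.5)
Your proof is correct and follows essentially the paper's route: the strong law for $\widehat\delta_{\max,t}$, Propositions~\ref{prop:gaussian-summable} and~\ref{prop:binary-finite} for $A_t/n_t\to0$, and the asymptotics of the closed form \eqref{eq:eprocess} with $R_t(\delta_0)=n_t\eta+o(n_t)$. The one difference is that you obtain $\underline\delta_t\to\delta_*$ by evaluating $\mathcal E_t$ at $\delta_*\pm\epsilon$ and using monotonicity in $\delta$, rather than citing $q_\alpha(V_t;\rho)/n_t\to0$; this usefully makes the upper side explicit.

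One small repair in the upper-side step: the constant $1$ dominates $L_t(\lambda)$ only when integrating against $\pi_\rho$; against Lebesgue measure you should use $e^{-\rho\lambda^2/2}$. In fact no limit is needed there, since for $R_t\le0$ the mixture is already at most $\sqrt{\rho/(V_t+\rho)}\le1<1/\alpha$.
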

\begin{proof}
The strong law gives $\widehat\delta_{\max,t}\to\delta_*$. Propositions~\ref{prop:gaussian-summable} and \ref{prop:binary-finite} give $A_t/n_t\to0$. The half-normal mixture boundary satisfies $q_\alpha(V_t;\rho)=O\{\sqrt{V_t\log V_t}\}$, hence $q_\alpha(V_t;\rho)/n_t\to0$. The lower sequence therefore converges to $\delta_*$. In the binary case, the inversion uses the monotonicity of $\delta+\overline{\sprem}_r^B(\delta)$ from Lemma~\ref{lem:binary-closed}; the same vanishing-charge argument applies locally around $\delta_*$. Substituting $R_t(\delta_0)=n_t\eta+o(n_t)$ into \eqref{eq:eprocess} gives the stated almost-sure log-growth limits.
\end{proof}

\section{A binary example}\label{sec:worked-binary}

Suppose the control response probability is $0.30$ and the six dose response probabilities are
\[
(0.32,0.38,0.44,0.50,0.47,0.42),
\]
so the true dose-control effects are $(0.02,0.08,0.14,0.20,0.17,0.12)$. The clinical margin is $\delta_0=0.05$, the anytime level is $\alpha=0.05$, and $\rho=13.41$. As noted in Section~\ref{sec:rho}, this value targets an information budget near $179$ subjects per arm on the binary $V_t=t/2$ scale; it is not needed for validity. One outcome is observed from each dose and the control at every update, so in this balanced illustration $t=n_t$ is also the number of subjects per arm.

Before observing update $s$, the software counts the number $r_s$ of doses tied for the cumulative response-count lead and adds the robust binary premium $\overline{\sprem}^{B}_{r_s}(\delta_0)$ to the running ledger
\[
\bar A_t^B(\delta_0)=\sum_{s=1}^t \overline{\sprem}^{B}_{r_s}(\delta_0).
\]
After the outcomes at look $t$ arrive, it computes the raw best risk difference, the selection charge $\bar A_t^B/t$, the monitoring margin $q_{0.05}(t/2;13.41)/t$, and their difference. Table~\ref{tab:worked-path} gives the quantities a data-monitoring report needs.

\begin{table}[H]
\centering
\caption{One representative path. The first global GO occurs at $t=216$ subjects per arm. Percentages are risk-difference points.}
\label{tab:worked-path}
\small
\begin{tabularx}{\textwidth}{@{}c p{0.19\textwidth} *{4}{>{\centering\arraybackslash}X} c@{}}
\toprule
$t=n_t$ & Current leader count versus control & Raw best & Selection charge & Monitoring margin & Anytime lower floor & Decision\\
\midrule
50  & Dose 4: $29/50$ vs. $18/50$ & $22.0\%$ & $5.1\%$ & $29.5\%$ & $-12.6\%$ & Continue\\
100 & Dose 4: $54/100$ vs. $33/100$ & $21.0\%$ & $6.3\%$ & $19.8\%$ & $-5.1\%$ & Continue\\
150 & Doses 3 and 4: $77/150$ vs. $48/150$ & $19.3\%$ & $6.5\%$ & $16.0\%$ & $-3.2\%$ & Continue\\
200 & Dose 4: $109/200$ vs. $64/200$ & $22.5\%$ & $5.0\%$ & $13.8\%$ & $3.7\%$ & Continue\\
216 & Dose 4: $116/216$ vs. $66/216$ & $23.15\%$ & $4.64\%$ & $13.32\%$ & $5.19\%$ & \textbf{GO}\\
\bottomrule
\end{tabularx}
\end{table}

For this path, the reported lower floor equals raw best minus charge minus monitoring margin because the robust binary charge is locally constant over the relevant values of $\delta$; in general Theorem~\ref{thm:cs} defines the floor by direct inversion of $\mathcal E_t(\delta)$.

At the crossing look,
\begin{equation}\label{eq:worked-simple}
\underbrace{0.2315}_{\text{raw best risk difference}}
-\underbrace{0.0464}_{\text{selection charge}}
-\underbrace{0.1332}_{\text{monitoring margin}}
=\underbrace{0.0519}_{\text{anytime lower floor}}
>\underbrace{0.0500}_{\text{clinical margin}}.
\end{equation}
The trial therefore stops for a global proof-of-activity conclusion. Up to the crossing, $179$ pre-update states had one leader, $34$ had a two-way tie, $2$ had a four-way tie, and $1$ had a six-way tie. These states produce the cumulative charge $\bar A_{216}^B=10.027$. The practical message is simple: early dose competition is charged, while a persistent unique leader adds no further binary tie charge.

\section{Simulation}\label{sec:sim}

At update $t$, one new binary outcome is generated in each of the six dose groups and in the common control group:
\begin{equation}\label{eq:sim-dgm}
X_{t,0}\sim\operatorname{Bernoulli}(p_0),\qquad
X_{t,k}\sim\operatorname{Bernoulli}(p_k),\quad k=1,\ldots,6,
\end{equation}
independently across groups and updates, with $p_0=0.30$. The clinical margin is $\delta_0=0.05$, the anytime level is $\alpha=0.05$, and the prespecified mixture tuning parameter is $\rho=13.41$.

Table~\ref{tab:sim-scenarios} shows the three models used in the simulation.

\begin{table}[H]
\centering
\caption{Three models with different levels of effect signal. The control probability is $p_0=0.30$ in every row. The dose effects are $\delta_k=p_k-p_0$, $\delta_*=\max_k\delta_k$, and $\eta=\delta_* - 0.05$ is the amount by which the best true effect exceeds the clinical margin.}
\label{tab:sim-scenarios}
\small
\begin{tabularx}{\textwidth}{@{}p{0.17\textwidth}XXcc@{}}
\toprule
Scenario & Dose response probabilities $(p_1,\ldots,p_6)$ & True risk differences $(\delta_1,\ldots,\delta_6)$ & $\delta_*$ & $\eta$\\
\midrule
Near threshold & $(0.31,0.35,0.38,0.40,0.39,0.36)$ & $(0.01,0.05,0.08,0.10,0.09,0.06)$ & $0.10$ & $0.05$\\
Moderate & $(0.32,0.37,0.41,0.45,0.43,0.39)$ & $(0.02,0.07,0.11,0.15,0.13,0.09)$ & $0.15$ & $0.10$\\
Strong & $(0.32,0.38,0.44,0.50,0.47,0.42)$ & $(0.02,0.08,0.14,0.20,0.17,0.12)$ & $0.20$ & $0.15$\\
\bottomrule
\end{tabularx}
\end{table}

Immediately before update $t$, the rule counts the number $r_t$ of dose groups tied for the current cumulative response-count lead and adds the nuisance-robust binary premium $\overline{\sprem}^{B}_{r_t}(\delta_0)$ to the ledger. After the new outcomes are observed, it declares the first global GO when
\begin{equation}\label{eq:sim-go-clear}
\widehat\delta_{\max,t}
-\frac{\bar A_t^B(\delta_0)}{t}
-\frac{q_{0.05}(t/2;13.41)}{t}
\ge 0.05.
\end{equation}
The cap of $500$ per arm is a simulation horizon, not a proposed sample size for a Phase II trial.

\subsection{Finite-horizon power of the anytime rule}
For an anytime rule, ``power'' must specify a horizon. We report
\[
\operatorname{Power}_{T}(p):=\Pr_p\{\tau\le T\},
\]
where $\tau$ is the first global GO time and $T$ is the maximum subjects per arm. This is the probability that the trial reaches a valid global GO by the operational cap. Table~\ref{tab:power-grid} uses the same six-dose shared-control binary design, continuous monitoring, $\alpha=0.05$, $\rho=13.41$, and $T=500$.

\begin{table}[H]
\centering
\caption{Finite-horizon power of the proposed anytime selection-premium rule. Each alternative uses $20{,}000$ trials and the same models as Table~\ref{tab:sim-scenarios}.}
\label{tab:power-grid}
\small
\begin{tabularx}{\textwidth}{@{}p{0.19\textwidth}cccccX@{}}
\toprule
Scenario & $\delta_*$ & $\eta$ & GO by 100 & GO by 200 & GO by 500 & 50\% GO time\\
\midrule
Near threshold & $0.10$ & $0.05$ & $1.7\%$ & $4.3\%$ & $11.5\%$ & Not reached\\
Moderate & $0.15$ & $0.10$ & $6.4\%$ & $19.0\%$ & $57.1\%$ & $440$ per arm\\
Strong & $0.20$ & $0.15$ & $18.7\%$ & $50.6\%$ & $94.8\%$ & $198$ per arm\\
\bottomrule
\end{tabularx}
\end{table}

As the best effect moves farther above the margin, both the GO probability and the time to GO improve. This is the desired behavior: larger true gaps create persistent leaders, which reduce future selection charges and make the corrected evidence grow faster.

\subsection{Comparison to other procedures}
We used the same six-dose shared-control binary model as the main experiment. Under the all-boundary null,
\[
X_{t,0}\sim\operatorname{Bernoulli}(0.30),\qquad
X_{t,k}\sim\operatorname{Bernoulli}(0.35),\quad k=1,\ldots,6,
\]
so every true dose-control risk difference equals the clinical margin $0.05$.

The armwise union e-process analyzes each named dose-control comparison separately using the same one-sided half-normal mixture construction and a Bonferroni allocation $\alpha/K$; it declares global GO when any armwise e-process crosses $K/\alpha$. The fixed-final many-to-one max-$z$ test uses the maximum of the six standardized risk-difference statistics at $t=500$ with a shared-control simulated critical value $c_{\mathrm{final}}\approx2.31$. The planned five-look max-$z$ test uses looks $t=100,200,300,400,500$ with a simulated many-to-one repeated-look critical value $c_{5}\approx2.78$. The two misuse rows apply these planned thresholds at additional unplanned looks. The no-charge row uses the proposed anytime boundary but sets $\bar A_t^B(\delta_0)=0$.

Table~\ref{tab:type-1} shows the Type I error behavior for different methods. The first two rows are anytime-valid procedures and may stop at any synchronized update. The next two rows show planned Dunnett-style analyses used only at the looks for which their thresholds were calibrated. These rows are valid for their stated schedules, but they are not anytime-valid. The red rows then show what happens when the same planned thresholds are reused at additional, unplanned looks. Under the all-boundary null, this reuse changes the monitored statistic from a maximum over doses at planned times into a maximum over doses and many extra times, and the false-GO probability rises accordingly. The final row shows the necessity of the selection charge.

\begin{table}[H]
\centering
\caption{Simulation demonstration that planned Dunnett-style thresholds are not anytime-valid. All rows use the all-boundary null above, with $60{,}000$ independent trials through $500$ subjects per arm. Rows labeled ``misuse'' intentionally apply those planned thresholds at unplanned looks.}
\label{tab:type-1}
\scriptsize
\renewcommand{\arraystretch}{1.12}
\begin{tabularx}{\textwidth}{@{}>{\raggedright\arraybackslash}X>{\raggedright\arraybackslash}p{0.28\textwidth}c>{\raggedright\arraybackslash}p{0.16\textwidth}@{}}
\toprule
Analysis rule & When may it stop? & Anytime-valid? & False GO by 500\\
\midrule
Proposed selection-premium rule & Any synchronized update & Yes & $1.35\%$ ($809/60{,}000$)\\
Armwise union e-process & Any synchronized update & Yes & $0.80\%$ ($479/60{,}000$)\\
Fixed-final many-to-one max-$z$ test & Only final look $t=500$ & No & $4.78\%$ ($2{,}870/60{,}000$)\\
Planned five-look many-to-one max-$z$ test & Planned looks only & No & $4.53\%$ ($2{,}720/60{,}000$)\\
\textcolor{clinicalred}{Misuse: five-look boundary checked every 10 subjects from $t=100$} & Extra unplanned looks & No & \textcolor{clinicalred}{$7.77\%$ ($4{,}664/60{,}000$)}\\
\textcolor{clinicalred}{Misuse: five-look boundary checked after every update from $t=100$} & Continuous extra looks & No & \textcolor{clinicalred}{$9.39\%$ ($5{,}636/60{,}000$)}\\
\textcolor{clinicalred}{Misuse: fixed-final max-$z$ cutoff checked every 10 subjects from $t=100$} & Extra unplanned looks & No & \textcolor{clinicalred}{$20.91\%$ ($12{,}546/60{,}000$)}\\
\textcolor{clinicalred}{Misuse: fixed-final max-$z$ cutoff checked after every update from $t=100$} & Continuous extra looks & No & \textcolor{clinicalred}{$24.27\%$ ($14{,}562/60{,}000$)}\\
\textcolor{clinicalred}{Misuse: same anytime boundary but no selection charge} & Any update & No & \textcolor{clinicalred}{$7.26\%$ ($4{,}358/60{,}000$)}\\
\bottomrule
\end{tabularx}
\end{table}

Table~\ref{tab:traditional-power} reports performance for the valid uses of the same comparators. The armwise union e-process is a strong comparator: in these simulations it has lower all-boundary false-GO probability and higher GO-by-500 probability in the moderate and strong alternatives, although it is slightly lower in the near-threshold alternative. This comparison is important because it shows that the proposed method is not offered as a uniformly most powerful procedure. Its distinct output is a single selected-maximum ledger and an anytime lower confidence sequence for $\delta_*$.

\begin{table}[H]
\centering
\caption{Stopping behavior of valid uses of the methods. Each cell reports ``GO by 500; mean stop-or-cap enrollment per arm.'' Each row uses the same $20{,}000$ simulated trials per scenario as Table~\ref{tab:power-grid}.}
\label{tab:traditional-power}
\scriptsize
\renewcommand{\arraystretch}{1.12}
\begin{tabularx}{\textwidth}{@{}p{0.32\textwidth}XXX@{}}
\toprule
Analysis rule
& Near threshold
& Moderate
& Strong\\
& $\delta_*=0.10$
& $\delta_*=0.15$
& $\delta_*=0.20$\\
\midrule
Proposed selection-premium rule
& $11.5\%;\ 472.7$
& $57.1\%;\ 369.5$
& $94.8\%;\ 223.2$\\

Armwise union e-process
& $10.0\%;\ 478.9$
& $60.3\%;\ 369.9$
& $97.0\%;\ 217.7$\\

Fixed-final many-to-one max-$z$ test
& $38.5\%;\ 500.0$
& $91.0\%;\ 500.0$
& $99.9\%;\ 500.0$\\

Five-look many-to-one max-$z$ test
& $29.3\%;\ 444.2$
& $84.4\%;\ 308.0$
& $99.6\%;\ 190.3$\\
\bottomrule
\end{tabularx}
\end{table}

The previous tables describe finite-horizon operating characteristics through $500$ subjects per arm. Figure~\ref{fig:convergence-clear} instead checks the long-run mechanism behind Corollary~\ref{cor:consistency}. Under the strong profile, one dose has the unique largest response probability, so eventually the empirical leader separates from the other doses. Once this happens, the binary tie premium is usually zero, because $r_t=1$. The figure tracks whether the three asymptotic claims become visible numerically: the selection charge per subject should vanish, the anytime lower floor should approach the true best effect, and the normalized log e-value should approach its theoretical growth rate.

\begin{figure}[H]
\centering

\includegraphics[width=\textwidth]{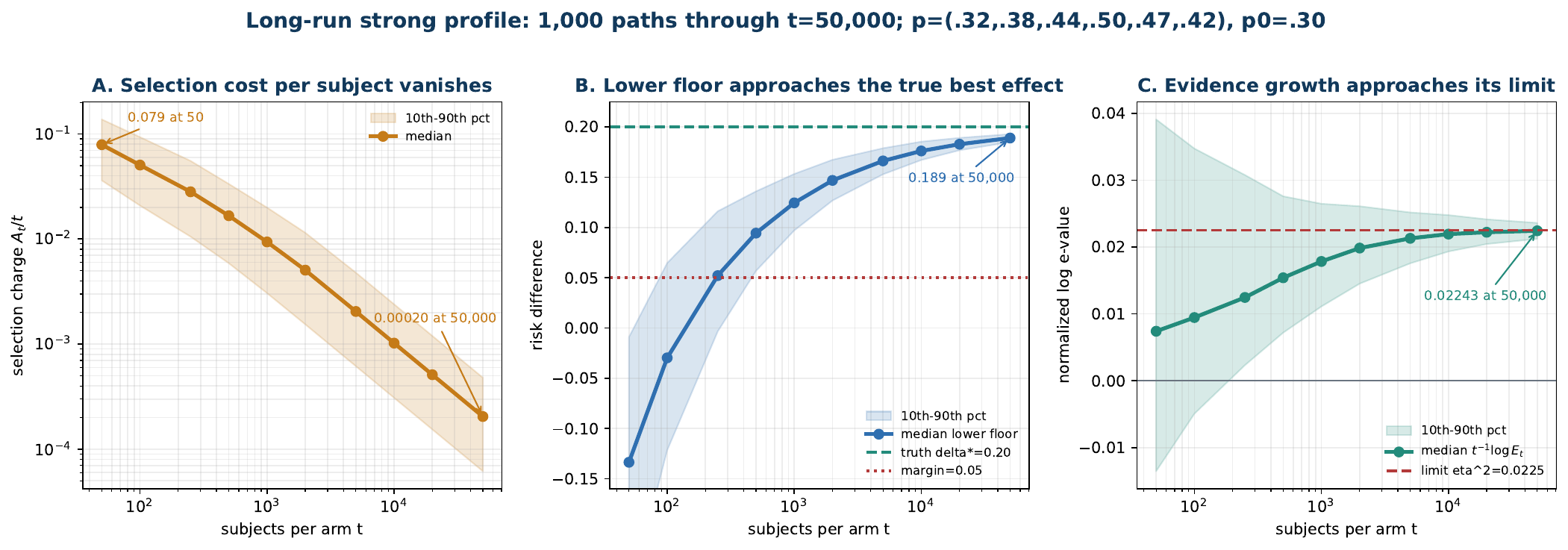}%

\caption{The exact strong-profile probabilities are $(0.32,0.38,0.44,0.50,0.47,0.42)$ with control probability $0.30$. Over $1{,}000$ paths through $50{,}000$ subjects per arm, Panel A shows the selection charge per subject becoming negligible, Panel B shows the median anytime lower confidence floor rising toward the true best effect $0.20$, and Panel C shows the median normalized log e-value approaching the theoretical evidence-growth limit $(0.20-0.05)^2=0.0225$. Shaded bands are the 10th-90th percentiles.}
\label{fig:convergence-clear}
\end{figure}

\section{Limitations and extensions}\label{sec:limits}

\paragraph{Dose-response trend information}
As noted in Introduction Section~\ref{sec:current-approaches}, our current approach does not account for dose-response trends. While this paper focuses on pairwise comparisons to clearly illustrate the methodology, incorporating trend information is an important direction for future work.

\paragraph{Balanced information blocks.}
The present theorems use synchronized blocks with one observation per active dose and control. Unequal randomization, delayed outcomes, covariate adjustment, arm addition, and arm dropping require an information-time or weighted-score extension.

\paragraph{Gaussian scale specification.}
The Gaussian theorem requires a prespecified covariance or a deterministic upper bound. Re-estimating variance from the same unblinded efficacy stream and inserting it without additional protection is not justified by the current proof. Independent or blinded variance adaptation is a natural extension.

\paragraph{Binary nuisance robustness.}
Maximizing over the unknown boundary rate is safe but can be conservative.

\paragraph{Global rather than named-dose inference.}
The lower confidence sequence concerns $\delta_*$. It does not establish that the empirical leader is the true best dose or quantify its selected-arm effect. A separate localization procedure remains necessary.

\paragraph{No universal dominance.}
The method can be more or less efficient than existing procedures depending on $K$, signal strength, score separation, endpoint distribution, and monitoring plan. The benefit of the proposed procedure is interpretable, state-adaptive selection accounting under flexible monitoring.

\section{Conclusion}

This paper shows how the selection-premium identity of \citet{de2026selection} can be turned into an anytime-valid procedure for dose-ranging trials. The key observation is that the maximum over dose-control contrasts has a predictable upward drift: before the next information block is observed, retaining the option to re-select the empirical leader has positive expected value, the selection premium. At an equal-margin reference it gives the exact expected winner optimism, and under the composite null it provides a conservative drift charge for the selected maximum.

Subtracting this predictable charge separates the two sources of distortion in a dose-ranging analysis. The cumulative premium $A_t/n_t$ accounts for selection optimism from reporting the current empirical winner, while the mixture boundary $q_\alpha(V_t;\rho)/n_t$ accounts for repeated monitoring over time. The resulting decision rule has the effect-scale form
\[
\widehat\delta_{\max,t}
\ge
\delta_0+\frac{A_t(\delta_0)}{n_t}
+\frac{q_\alpha(V_t;\rho)}{n_t}.
\]
Thus a global GO decision is made only when the raw best observed effect clears the clinical margin after paying both the selection charge and the anytime monitoring margin.

For Gaussian shared-control outcomes, the selection charge is computed from the observed cumulative dose sums and a prespecified covariance bound. For binary outcomes, the charge has a closed form depending on the number of current tied leaders; a unique leader contributes no new binary selection premium, while early ties produce a cumulative ledger that is later diluted on the effect scale. In both cases, the corrected residual yields an e-process and therefore controls the probability of ever making a false global GO decision. Inverting the candidate-margin tests gives an anytime lower confidence sequence for $\delta_*$, the best true effect among the tested doses. The simulations support the theoretical message.

The method developed in this paper gives a program-level proof-of-activity statement: it tests whether at least one studied dose exceeds the clinical margin and provides a lower confidence sequence for the best effect in the tested range. It does not identify the true best dose, give selected-arm inference for the empirical leader, or replace dose-response modeling, safety assessment, pharmacokinetic analysis, or benefit-risk evaluation. Its value is a mathematically explicit and operationally transparent way to make winner-bias-aware, anytime-valid global decisions in exploratory dose-ranging trials.

\bibliographystyle{plainnat}

\bibliography{refs}
\end{document}